\def\bm#1{\mbox{\boldmath $#1$}}
\newtheorem{theorem}{Theorem}
\newtheorem{lemma}{Lemma}
\newtheorem{corollary}{Corollary}
\newtheorem{proposition}{Proposition}
\newtheorem{remark}{Remark}
\begin{document}
\title{Location Information Aided Multiple Intelligent Reflecting Surface Systems}
 \author{\authorblockN{Xiaoling Hu, Caijun Zhong, Yu Zhang, Xiaoming Chen, and Zhaoyang Zhang
 \thanks{X. Hu, C. Zhong, X. Chen and Z. Zhang are with the College of Information Science and Electronic Engineering, Zhejiang University, Hangzhou, China (Email:  \{11631052, caijunzhong, chen\_xiaoming, ning\_ming\}@zju.edu.cn).}
  \thanks{Y. Zhang is College of Information Engineering, Zhejiang University of Technology, Hangzhou 310014, China (Email: yzhang@zjut.edu.cn).}
 }}
\maketitle
\begin{abstract}
This paper proposes a novel location information aided multiple intelligent reflecting surface (IRS) systems. Assuming imperfect user location information, the effective angles from the IRS to the users are estimated, which is then used to design the transmit beam and IRS beam. Furthermore, closed-form expressions for the achievable rate are derived. The analytical findings indicate that the achievable rate can be  improved by increasing the number of base station (BS) antennas or reflecting elements. Specifically, a power gain of order $N M^2$ is achieved, where $N$ is the antenna number and $M$ is the number of reflecting elements. Moreover, with a large number of reflecting elements, the individual signal to interference plus noise ratio (SINR) is proportional to $M$, while becomes proportional to $M^2$ as non-line-of-sight (NLOS) paths vanish.
Also, it has been shown that high location uncertainty would significantly degrade the achievable rate.
Besides, IRSs  should  be  deployed  at distinct  directions (relative to the BS) and be far away from each other to reduce the interference from multiple IRSs. Finally, an optimal power allocation scheme has been proposed to improve the system performance.
\end{abstract}

\newpage

\section{Introduction}
With the desirable feature of low energy consumption and low hardware complexity, the intelligent reflecting surface (IRS) has recently emerged as a key candidate technology for future wireless communication systems \cite{tang2019mimo}. Specifically, an IRS is a meta-surface comprising a large number of low-cost, nearly passive, reflecting elements, each of which can independently reflect the incident signal with adjustable phase shifts.
By properly adjusting its phase shifts, the IRS can enhance the desired signal power at the receiver, thereby helping combat the unfavorable wireless propagation environment.

Due to the aforementioned benefits, IRS-aided wireless communications have attracted considerable research interests. In \cite{wu2018intelligent,wu2019intelligent,J.Gao,X.Hu,Yu2019MISO}, the joint transmit beamforming and IRS beamforming problem was studied, while in \cite{huang2019reconfigurable}, transmit power allocation and phase shift beamforming were jointly optimized. Later on, the works \cite{wu2019beamforming_TCOM} and \cite{wu2019beamforming} investigated a more realistic case where the IRS only has a finite number of phase shifts at each element. The multi-user multi-IRS scenario has been considered in \cite{li2019joint}, where joint active and passive beamformers are designed based on the assumption that global channel state information (CSI) is available. Moreover, the integration of IRS with other promising technologies has been studied, including cognitive radio \cite{yuan2019intelligent}, non-orthogonal multiple
access (NOMA) \cite{yang2019intelligent,fu2019intelligent,mu2019exploiting,ding2019simple}, two-way communications \cite{Y.Zhang}, millimeter wave (mmWave)\cite{wang2019joint,wang2019intelligent}, terahertz communications \cite{ning2019channel}, physical-layer security \cite{cui2019secure,Chen2019Intelligent,yu2019enabling} and simultaneous wireless information and power transfer (SWIPT)\cite{wu2019joint,pan2019intelligent}.

However, there are two main challenges when performing IRS beamforming in practice. The first one is the requirement of instantaneous CSI, as adopted by all the aforementioned works. Due to the passive architecture of the IRS and large number of reflecting elements, CSI acquisition is highly non-trivial. The conventional CSI estimation techniques are not applicable, and the typical method is to estimate the cascaded channel instead of individual channels \cite{mishra2019channel,zheng2019intelligent,he2019cascaded,wang2019channel,you2019intelligent}. However, with a large number of reflecting elements, the channel training overhead becomes prohibitively high. To reduce the training overhead, the works \cite{yang2019intelligent,zheng2019intelligent} proposed to divide the IRS elements into groups where only the effective channel for all elements in each group are estimated. But this method comes at the cost of degraded IRS passive beamforming performance, since the phase shifts of reflecting elements in each group needs to be set identical.

Another challenge is that a separate link is required for information exchange between the base station (BS) and the IRS. In general, phase shifts are designed at the BS according to the instantaneous CSI, and then shared with the IRS via the separate link \cite{abeywickrama2019intelligent,zheng2019intelligent}. Due to the time varying nature of the wireless channel, phase shifts designed with instantaneous CSI need to be updated frequently. In addition, as the number of reflecting elements becomes larger, the required capacity of the separate link also increases. To reduce the frequency of information exchange between the BS and IRS, the work \cite{han2019large,X.Hu2} proposed to use statistical CSI for phase shift design.

 To address the aforementioned two challenges, we propose a novel location information aided multi-IRS design framework, where the design of the transmit  beam and phase shifts only relies on statistical CSI obtained from location information.
Compared with the traditional design framework which usually requires full CSI and involves large training overhead,   our proposed design framework has the following three major advantages. First of all, the location information can be easily obtained by global position system (GPS), hence the training overhead can be substantially reduced. Secondly, the location information varies much slower compared with the instantaneous CSI, hence no frequent update is required. Thirdly, only very small amount of location information needs to be shared between the BS and IRS, hence only low-capacity connection is required, which further reduces the hardware implementation cost.
The main contributions of this paper are summarized as follows:
\begin{itemize}
    \item Utilizing the imperfect location information, the angle information of the line-of-sight (LOS) path is estimated, and the statistical properties of the estimation error are characterized.
    \item  Based on the estimated angle information, a low-complexity beamforming scheme is proposed, where closed-form expressions of the transmit beam and phase shift beam are obtained.
Simulation results show that when the individual user rate requirement is low, the proposed
beamforming scheme is superior to the joint optimization scheme in   \cite{wu2019intelligent}.
    \item The closed-form expression for the achievable rate of the proposed scheme is presented, which facilitates the study of the impact of key system parameters such as location accuracy, number of reflecting elements, number of antennas and Rician K-factor.
    \item Finally,  under the proposed beamforming scheme, an optimal power control scheme is proposed to minimize the total transmit power, subject to individual user rate constraints.
\end{itemize}

The remainder of the paper is organized as follows. In Section \ref{s1}, we introduce the multi-IRS system, while in Section \ref{s2}, we propose an angle estimation scheme based on location information. According to the estimated  effective angles, a low-complexity scheme for BS beamforming and IRS beamforming is presented in Section \ref{s3}. Then, we give a detailed analysis of the achievable rate in Section \ref{s4}. Also, an optimal power control scheme is proposed in \ref{PC}.
Numerical results and discussions are provided in Section \ref{s5}, and finally Section \ref{s6} concludes the paper.

Notation: Boldface lower case and upper case letters are used for column vectors and matrices, respectively. The superscripts ${\left(\cdot\right)}^{*}$, ${\left(\cdot\right)}^{T}$, ${\left(\cdot\right)}^{H}$, and ${\left(\cdot\right)}^{-1}$ stand for the conjugate, transpose, conjugate-transpose, and matrix inverse, respectively. Also, the Euclidean norm, absolute value, Hadamard product  are denoted by $\left\| \cdot \right\|$, $\left|\cdot\right|$ and $\odot$ respectively. In addition, $\mathbb{E}\left\{\cdot\right\}$ is the expectation operator, and $\text{tr}\left(\cdot\right)$ represents the trace.
For a matrix ${\bf A}$, ${[\bf A]}_{mn}$ denotes its entry in the $m$-th row and $n$-th column, while for a vector ${\bf a}$, ${[\bf a]}_{m}$ denotes the $m$-th entry of it. Besides, $j$ in $e^{j \theta}$ denotes the imaginary unit.
Finally, $z \sim \mathcal{CN}(0,{\sigma}^{2})$ denotes a circularly symmetric complex Gaussian random variable (RV) $z$ with zero mean and variance $\sigma^2$, and $z \sim \mathcal{N}(0,{\sigma}^{2})$ denotes a real valued Gaussian RV.

\section{System Model} \label{s1}
We consider a multi-IRS system, as illustrated in Fig. \ref{f0}, where one BS with $N$ antennas communicates with $K$ single antenna users, each of which is assisted by an IRS with $M$ reflecting elements.{ \footnote{
  When the number of users is large, it is infeasible to associate each user with a unique IRS. In such scenario, the IRSs can be assigned to users with poor coverage. Alternatively, effective user scheduling protocol can be designed to exploit the benefit of IRSs.}} We further assume that the direct links between the BS and users do not exist  due to blockage or unfavorable propagation environments \cite{huang2019reconfigurable}.
The BS is connected with IRSs via low-capacity hardware links so that they can exchange information (e.g. CSI and phase shifts).
As in \cite{yuan2019intelligent,he2019adaptive}, we assume that both the BS and IRSs use uniform linear arrays (ULA).{ \footnote{ The main reason to use ULA instead of uniform planar array (UPA) is that the simple structure of the ULA  helps to get some useful and intuitive insights. In addition, the UPA can be regarded as multiple ULAs. As such, the analytical results obtained for ULA can be readily extended to the UPA structure.}}
Without loss of generality, we assume that the BS is located at the origin of a Cartesian coordinate system, and the ULAs of both the BS and IRSs are along the $y$ axis.

\begin{figure}[!ht]
  \centering
  \includegraphics[width=3.5in]{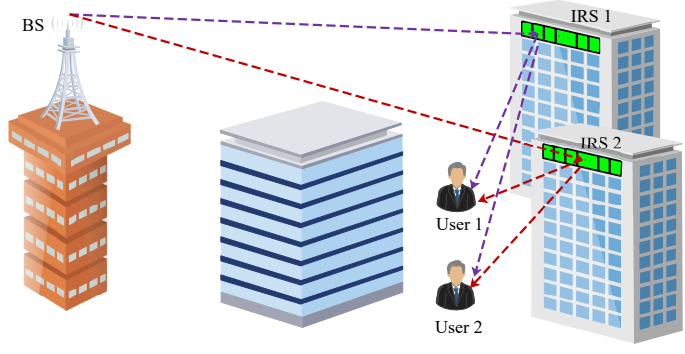}
  \caption{Model of the location information aided multi-IRS system with $K=2$ users assisted by 2 IRSs coated on two buildings. The reflecting elements are represented by green rectangles.}
  \label{f0}
\end{figure}

To fully exploit the potential of IRSs, they are usually deployed in desirable locations with LOS paths to both the BS and users. As such, we use angle-domain Rician fading to model the channels between IRSs and the BS or users\cite{wu2019beamforming_TCOM}.  Specifically, the channel from the BS to the $m$-th $\text{IRS}$ can be expressed as
\begin{align} 
{\bf G}_{\text{B2I},m}\!=\!\sqrt{\alpha_{\text{B2I},m}\frac{v_{\text{B2I},m}}{v_{\text{B2I},m}\!+\!1}} \bar{\bf G}_{\text{B2I},m} \!+\!\sqrt{\alpha_{\text{B2I},m}\frac{1}{v_{\text{B2I},m}\!+\!1}} \widetilde{\bf G}_{\text{B2I},m},
\end{align}
where $\widetilde{\bf G}_{\text{B2I},m}{  \in \mathbb{C}^{M \times N} } $ is the non-line-of-sight (NLOS) component, whose elements follow the $\mathcal{CN}\left(0,1\right)$ distribution, $\alpha_{\text{B2I},m}$ models large-scale fading, and $v_{\text{B2I},m} $ is the Rician K-factor.
The LOS component $\bar{\bf G}_{\text{B2I},m} {  \in \mathbb{C}^{M \times N} } $ is given by
\begin{align}
 \bar{\bf G
 }_{\text{B2I},m}= {\bf b} \left(\vartheta_{\text{y-B2Ia},m} \right)
{\bf a}^T \left(\vartheta_{\text{y-B2I},m} \right),
\end{align}
where ${\bf a}\left(\vartheta_{\text{y-B2I},m} \right) {  \in \mathbb{C}^{N \times 1} } $ is the array response vector at the BS, with the effective angle  of departure (AOD), i.e., the phase difference between two adjacent antennas along the $y$ axis,  given by
 \begin{align}
 \vartheta_{\text{y-B2I},m}=-\frac{2 \pi d_{\text{BS}}}{\lambda} \cos \theta_{\text{B2I},m} \sin \phi_{\text{B2I},m},
  \end{align}
 where $d_{\text{BS}}$ is the  distance between two adjacent BS antennas,
$\lambda$ is the carrier wavelength, $ \theta_{\text{B2I},m}$ and $\phi_{\text{B2I},m}$ are the elevation and azimuth  AODs from the BS to the $m$-th IRS, respectively.

Similarly, ${\bf b} \left(\vartheta_{\text{y-B2Ia},m} \right) {  \in \mathbb{C}^{M \times 1} }  $ is the array response vector at the $m$-th IRS, where the effective angle of arrival (AOA) is given by
\begin{align}
 \vartheta_{\text{y-B2Ia},m}= \frac{2 \pi d_{\text{IRS}} }{\lambda} \cos \theta_{\text{B2Ia},m} \sin \phi_{\text{B2Ia},m},
 \end{align}
 where $d_{\text{IRS}}$ is the  distance between two adjacent reflecting elements,
 $ \theta_{\text{B2Ia},m}$ and $\phi_{\text{B2Ia},m}$ are the elevation and azimuth AOAs at the $m$-th IRS, respectively. Furthermore, without loss of generality, we assume $d_{\text{BS}}=d_{\text{IRS}}=\frac{\lambda}{2}$.

 Specifically,  the $n$-th element of ${\bf a} \left(\vartheta  \right)$ and the $l$-th element of ${\bf b} \left( \vartheta  \right)$ are given by
\begin{align}
&a_n=e^{j\pi
\left(n-1\right) \vartheta }, n=1,...,N,\\
&b_{l}=e^{ j\pi\left(l-1\right)
 \vartheta   }, l=1,...,M.
\end{align}

Similarly, the channel from the $m$-th IRS to the $k$-th user is given by ${\bf g}_{\text{I2U},mk}^T {  \in \mathbb{C}^{1 \times M} } $:
\begin{align}
{\bf g}_{\text{I2U},mk}^T=\sqrt{\alpha_{\text{I2U},mk}\frac{v_{\text{I2U},mk}}{v_{\text{I2U},mk}+1}} \bar{\bf g}_{\text{I2U},mk}^T
+\sqrt{\alpha_{\text{I2U},mk}\frac{1}{v_{\text{I2U},mk}+1}} \widetilde{\bf g}_{\text{I2U},mk}^T    ,
\end{align}
with the LOS component $\bar{\bf g}_{\text{I2U},mk}^T {  \in \mathbb{C}^{1 \times M} }  $ given by
\begin{align}
   \bar{\bf g}_{\text{I2U},mk}^T= {\bf b}^T \left(\vartheta_{\text{y-I2U},mk} \right).
\end{align}


During the downlink data transmission phase, the BS broadcasts the signal
\begin{align}
{\bf x}=\sum\limits_{i=1}^{K} {\bf w}_i s_i,
\end{align}
where ${\bf w}_i {  \in \mathbb{C}^{N \times 1} } $ is the transmit beamforming vector and $s_i$ is the symbol for the $i$-th user, satisfying $\mathbb{E}\left\{ | s_i|^2\right\}=1$.

Then, the signal received  at the $k$-th user  can be expressed as
\begin{align}
    y_k=\sum\limits_{m=1}^{K} \sum\limits_{i=1}^{K}  {\bf g}_{\text{I2U},mk}^T { \bm \Theta_{m}} {\bf G}_{\text{B2I},m} {\bf w}_i s_i+n_k,
\end{align}
where $n_{k} \sim \mathcal{CN} \left(0,\sigma_0^2\right)$ is the noise at the $k$-th user.  The phase shift matrix of the $m$-th IRS is given by  ${\bm \Theta}_m= {\text {diag}} \left( {\bm \xi}_m\right) {  \in \mathbb{C}^{M \times M} } $
with the phase shift  beam $ {\bm \xi}_m={[ e^{j\vartheta_{m,1}},..., e^{j\vartheta_{m,n}},...,  e^{j\vartheta_{m,M}}]}^T {  \in \mathbb{C}^{M \times 1} }$.

\section{Location based angle information acquisition}\label{s2}
To facilitate the design of the transmit beam ${\bf w}_i$ and phase shift beam ${\bm \xi}_m$, CSI of individual channels is necessary. In this section, we exploit the user location information provided by the GPS to obtain angle information of the LOS path. Furthermore, due to the fact that IRS locations are fixed, we assume that they are perfectly known by the BS.

Without loss of generality, we assume the  BS is located  at the origin $(0,0,0)$, and the $m$-th IRS is located at $(x_{\text{I},m}, y_{\text{I},m}, z_{\text{I},m})$, then the effective AOD from the BS to the $m$-th IRS can be computed as
\begin{align}
\vartheta_{\text{y-B2I},m} &=-\frac{y_{\text{I},m}}{ d_{\text{B2I},m}},
\end{align}
where $ d_{\text{B2I},m}=\sqrt{x_{\text{I},m}^2+y_{\text{I},m}^2+z_{\text{I},m}^2}$ is the distance between the BS and the $m$-th IRS.

Similarly, the effective AOA at the $m$-th IRS can be  calculated as
\begin{align}
\vartheta_{\text{y-B2Ia},m} &=\frac{y_{\text{I},m}}{ d_{\text{B2I},m}}.
\end{align}

Due to user mobility or other unfavorable conditions, the user location information  provided by the GPS is imperfect. Specifically, the accurate location of user $k$, i.e., $( x_{\text{U},k}, y_{\text{U},k}, z_{\text{U},k})$, is uniformly distributed within a sphere with the radius $\Upsilon$ and center point  $(\hat x_{\text{U},k},\hat  y_{\text{U},k},\hat z_{\text{U},k})$, where $(\hat x_{\text{U},k},\hat  y_{\text{U},k},\hat z_{\text{U},k})$ is the estimated location of user $k$, provided by the GPS.

The $m$-th IRS calculates its effective AOD from itself to the $k$-th  user as
\begin{align}\label{E1}
\hat\vartheta_{\text{y-I2U},mk} &=\frac{y_{\text{I},m}-\hat y_{\text{U},k}}{\hat d_{\text{I2U},mk}},
\end{align}
where $\hat d_{\text{I2U},mk}=\sqrt{(x_{\text{I},m}-\hat x_{\text{U},k})^2+(y_{\text{I},m}-\hat y_{\text{U},k})^2+(z_{\text{I},m}-\hat z_{\text{U},k})^2}$ is the  distance between the $m$-th IRS and the $k$-th user.

\begin{proposition} \label{p1}
The   effective AOD from the $m$-th IRS  to the $k$-th  user can be decomposed as
\begin{align}
   \vartheta_{\text{y-I2U},mk} =
   \hat\vartheta_{\text{y-I2U},mk} +\epsilon_{\text{y-I2U},mk},\label{E5}
\end{align}
where
\begin{align}
 & \epsilon_{\text{y-I2U},mk}
  =  \frac{ \left( \hat\vartheta^2_{\text{y-I2U},mk}-1\right) \Delta y_{\text{U},k} + \hat\vartheta_{\text{y-I2U},mk} \hat\vartheta_{\text{z-I2U},mk} \Delta z_{\text{U},k}+\hat\vartheta_{\text{y-I2U},mk}\hat\vartheta_{\text{z-I2U},mk} \Delta z_{\text{U},k} }{\hat d_{\text{I2U},mk}},
\end{align}
where   $\hat \vartheta_{\text{x-I2U},mk}  \triangleq \frac{x_{\text{I},m}-\hat x_{\text{U},k}}{\hat d_{\text{I2U},mk}}$, $\hat \vartheta_{\text{z-I2U},mk}  \triangleq \frac{z_{\text{I},m}-\hat z_{\text{U},k}}{\hat d_{\text{I2U},mk}}$,
 $\Delta x_{\text{U},k}=x_{\text{U},k}-\hat x_{\text{U},k}$,
  $\Delta y_{\text{U},k}=y_{\text{U},k}-\hat y_{\text{U},k}$ and
 $\Delta z_{\text{U},k}=z_{\text{U},k}-\hat z_{\text{U},k}$ are location errors along $x$, $y$  and $z$ axes, respectively.
\end{proposition}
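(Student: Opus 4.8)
The plan is to treat the effective AOD as a smooth function of the unknown user coordinates and to linearise it about the GPS estimate. Writing $\mathbf{p}_{\text{I},m}=(x_{\text{I},m},y_{\text{I},m},z_{\text{I},m})$ and, for a generic user position $\mathbf{u}=(u_x,u_y,u_z)$, defining
\begin{align}
f_m(\mathbf{u})=\frac{y_{\text{I},m}-u_y}{\sqrt{(x_{\text{I},m}-u_x)^2+(y_{\text{I},m}-u_y)^2+(z_{\text{I},m}-u_z)^2}},
\end{align}
we have $\vartheta_{\text{y-I2U},mk}=f_m(\mathbf{u}_k)$ with $\mathbf{u}_k$ the true location of user $k$, and $\hat\vartheta_{\text{y-I2U},mk}=f_m(\hat{\mathbf{u}}_k)$ with $\hat{\mathbf{u}}_k$ the estimated location; likewise $\hat\vartheta_{\text{x-I2U},mk}$, $\hat\vartheta_{\text{z-I2U},mk}$ and $\hat d_{\text{I2U},mk}$ are the corresponding quantities evaluated at $\hat{\mathbf{u}}_k$.

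Next I would invoke the first-order Taylor expansion of $f_m$ about $\hat{\mathbf{u}}_k$,
\begin{align}
\vartheta_{\text{y-I2U},mk}=\hat\vartheta_{\text{y-I2U},mk}+\big(\nabla f_m(\hat{\mathbf{u}}_k)\big)^{T}(\mathbf{u}_k-\hat{\mathbf{u}}_k)+r_{mk},
\end{align}
where $r_{mk}$ is a remainder of order $\Upsilon^2$. Since the GPS uncertainty radius $\Upsilon$ is small relative to the IRS–user distance $\hat d_{\text{I2U},mk}$, $r_{mk}$ is negligible, so the displayed linear term is taken as $\epsilon_{\text{y-I2U},mk}$. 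This reduction of the claim to a gradient evaluation is the only place where the small-$\Upsilon$ assumption is used, and making the neglect of $r_{mk}$ rigorous — e.g. bounding the Hessian of $f_m$ uniformly over the sphere of radius $\Upsilon$ centred at $\hat{\mathbf{u}}_k$ — is the step that needs the most care; everything else is routine differentiation.

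Finally I would evaluate the three partial derivatives. Using $\partial(x_{\text{I},m}-u_x)/\partial u_x=-1$, $\partial(y_{\text{I},m}-u_y)/\partial u_y=-1$, $\partial(z_{\text{I},m}-u_z)/\partial u_z=-1$ and the chain rule on the square-root denominator, one obtains at $\hat{\mathbf{u}}_k$
\begin{align}
\frac{\partial f_m}{\partial u_x}&=\frac{\hat\vartheta_{\text{y-I2U},mk}\,\hat\vartheta_{\text{x-I2U},mk}}{\hat d_{\text{I2U},mk}},\\
\frac{\partial f_m}{\partial u_y}&=\frac{\hat\vartheta^{2}_{\text{y-I2U},mk}-1}{\hat d_{\text{I2U},mk}},\\
\frac{\partial f_m}{\partial u_z}&=\frac{\hat\vartheta_{\text{y-I2U},mk}\,\hat\vartheta_{\text{z-I2U},mk}}{\hat d_{\text{I2U},mk}},
\end{align}
the extra $-1$ in the $u_y$-derivative arising from differentiating the numerator. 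Substituting these expressions together with $\Delta x_{\text{U},k}=u_x-\hat x_{\text{U},k}$, $\Delta y_{\text{U},k}=u_y-\hat y_{\text{U},k}$, $\Delta z_{\text{U},k}=u_z-\hat z_{\text{U},k}$ into the Taylor expansion collects the three contributions into the stated form of $\epsilon_{\text{y-I2U},mk}$ (with the $\Delta x_{\text{U},k}$ term carrying the factor $\hat\vartheta_{\text{y-I2U},mk}\hat\vartheta_{\text{x-I2U},mk}$), which establishes the decomposition \eqref{E5}.
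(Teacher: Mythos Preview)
Your proposal is correct and is essentially the same linearisation argument as the paper's: both keep only first-order terms in the location error $(\Delta x_{\text{U},k},\Delta y_{\text{U},k},\Delta z_{\text{U},k})$. The paper reaches the same coefficients by first rewriting $\vartheta_{\text{y-I2U},mk}=\frac{\hat d_{\text{I2U},mk}}{d_{\text{I2U},mk}}\hat\vartheta_{\text{y-I2U},mk}-\frac{\Delta y_{\text{U},k}}{d_{\text{I2U},mk}}$ and then Taylor-expanding $\hat d_{\text{I2U},mk}/d_{\text{I2U},mk}=(1+Q)^{-1/2}\approx 1-\tfrac{1}{2}Q$, whereas you compute the gradient of $f_m$ directly; your route is a little cleaner but the underlying approximation is identical, and you have also correctly identified that the third term should carry the factor $\hat\vartheta_{\text{y-I2U},mk}\hat\vartheta_{\text{x-I2U},mk}\,\Delta x_{\text{U},k}$ rather than the duplicated $z$-term appearing in the statement.
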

\begin{proof}
See Appendix \ref{Ax1}.
\end{proof}

Furthermore, the estimation error $\epsilon_{\text{y-I2U},mk}$ has the following distribution.
 \begin{theorem}\label{t1}
 The PDF of estimation error  $\epsilon_{\text{y-I2U},mk}$ is given by
 \begin{align}
   f_{\epsilon_{\text{y-I2U},mk}} \left( x\right) = - \frac{3 {\hat{d}}^3_{\text{I2U},mk} } {4 \Upsilon^3} {\Phi}^{-3}_{\text{y-I2U},mk} x^2  +\frac{3 {\hat{d}}_{\text{I2U},mk}}{4 \Upsilon} {\Phi}^{-1}_{\text{y-I2U},mk}, \  |x|\le \frac{\Upsilon}{\hat{d}_{\text{I2U},mk}} {\Phi}_{\text{y-I2U},mk},
 \end{align}
 where
 \begin{align}
 \Phi_{\text{y-I2U},mk} \triangleq \sqrt {{\left( \hat\vartheta^2_{\text{y-I2U},mk}-1 \right)}^2+\hat\vartheta^2_{\text{y-I2U},mk} \hat\vartheta^2_{\text{z-I2U},mk}+\hat\vartheta^2_{\text{y-I2U},mk} \hat\vartheta^2_{\text{x-I2U},mk}}.
 \end{align}

 The  mean and  variance of  $\epsilon_{\text{y-I2U},mk}$ are given by
 \begin{align}
    &\mu_{\text{y-I2U},mk}=\mathbb{E}\left\{ \epsilon_{\text{y-I2U},mk}\right\}=0,\\
  & \sigma_{\text{y-I2U},mk}^2=\mathbb{E}\left\{\epsilon^2_{\text{y-I2U},mk}\right\}= \frac{\Upsilon^2}{5\hat{d}^2_{\text{I2U},mk}} \Phi^2_{\text{y-I2U},mk}.
 \end{align}
 \end{theorem}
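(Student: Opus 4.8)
The plan is to exploit the fact that, by Proposition~\ref{p1}, $\epsilon_{\text{y-I2U},mk}$ is a \emph{fixed linear form} in the three location-error components, namely
\[
\epsilon_{\text{y-I2U},mk}=\frac{1}{\hat d_{\text{I2U},mk}}\Bigl(\hat\vartheta_{\text{y-I2U},mk}\hat\vartheta_{\text{x-I2U},mk}\,\Delta x_{\text{U},k}+\bigl(\hat\vartheta^2_{\text{y-I2U},mk}-1\bigr)\Delta y_{\text{U},k}+\hat\vartheta_{\text{y-I2U},mk}\hat\vartheta_{\text{z-I2U},mk}\,\Delta z_{\text{U},k}\Bigr),
\]
whose coefficient vector has squared norm exactly $\Phi^2_{\text{y-I2U},mk}$. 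Since $(\Delta x_{\text{U},k},\Delta y_{\text{U},k},\Delta z_{\text{U},k})$ is uniform on the ball of radius $\Upsilon$ centred at the origin, I would first write it as $\Upsilon\,\mathbf v$ with $\mathbf v$ uniform on the unit ball of $\mathbb{R}^3$, so that $\epsilon_{\text{y-I2U},mk}=\tfrac{\Upsilon}{\hat d_{\text{I2U},mk}}\,\mathbf c^{T}\mathbf v$ with $\|\mathbf c\|=\Phi_{\text{y-I2U},mk}$.

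Next I would invoke the rotational invariance of the uniform law on the unit ball: for any unit vector, its inner product with $\mathbf v$ has the same distribution as the first coordinate $v_1$. This gives the distributional identity $\epsilon_{\text{y-I2U},mk}\stackrel{d}{=}\tfrac{\Upsilon\,\Phi_{\text{y-I2U},mk}}{\hat d_{\text{I2U},mk}}\,v_1$, and reduces the whole theorem to the elementary one-dimensional problem of finding the law of $v_1$. The cross-section of the unit ball at level $v_1=t$ is a disk of radius $\sqrt{1-t^2}$, hence $f_{v_1}(t)\propto \pi(1-t^2)$; normalising by the ball volume $\tfrac43\pi$ yields $f_{v_1}(t)=\tfrac34(1-t^2)$ on $[-1,1]$. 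A linear change of variables $x=\tfrac{\Upsilon\Phi_{\text{y-I2U},mk}}{\hat d_{\text{I2U},mk}}t$ then produces precisely the stated quadratic PDF on $|x|\le \tfrac{\Upsilon}{\hat d_{\text{I2U},mk}}\Phi_{\text{y-I2U},mk}$.

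The moments then drop out: the mean is zero because $f_{\epsilon_{\text{y-I2U},mk}}$ (equivalently $f_{v_1}$) is an even function, and for the variance I would compute $\mathbb{E}\{v_1^2\}=\int_{-1}^{1}t^2\cdot\tfrac34(1-t^2)\,dt=\tfrac15$ and rescale by $\bigl(\tfrac{\Upsilon\Phi_{\text{y-I2U},mk}}{\hat d_{\text{I2U},mk}}\bigr)^2$, giving $\sigma^2_{\text{y-I2U},mk}=\tfrac{\Upsilon^2}{5\hat d^2_{\text{I2U},mk}}\Phi^2_{\text{y-I2U},mk}$. Equivalently one can note that, by symmetry, $\mathbb{E}\{v_iv_j\}=\tfrac15\delta_{ij}$, so $\mathbb{E}\{(\mathbf c^T\mathbf v)^2\}=\tfrac15\|\mathbf c\|^2$, which is the same thing.

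The step I expect to be the real crux is the reduction via rotational symmetry; if one preferred not to appeal to it, the density of $\epsilon_{\text{y-I2U},mk}$ would instead have to be obtained directly as the normalised area of the intersection of the radius-$\Upsilon$ ball with the plane $\mathbf c^T(\Delta x_{\text{U},k},\Delta y_{\text{U},k},\Delta z_{\text{U},k})=\hat d_{\text{I2U},mk}\,x$, which is exactly the same disk-area computation viewed from a different angle, with the distance from the centre to that plane playing the role of $t$. A minor bookkeeping point is that one must read the third term of $\epsilon_{\text{y-I2U},mk}$ in Proposition~\ref{p1} as $\hat\vartheta_{\text{y-I2U},mk}\hat\vartheta_{\text{x-I2U},mk}\,\Delta x_{\text{U},k}$ (consistent with the definition of $\Phi_{\text{y-I2U},mk}$), the displayed expression containing an evident typo.
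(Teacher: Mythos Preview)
Your proposal is correct and rests on the same geometric fact as the paper's proof, but the packaging differs. The paper (Appendix~\ref{At1}) writes $\epsilon_{\text{y-I2U},mk}=a\,\Delta y_{\text{U},k}+b\,\Delta z_{\text{U},k}+c\,\Delta x_{\text{U},k}$, computes the CDF $F(x)$ directly as the (normalised) volume of the spherical cap cut off by the plane $a\,\Delta y+b\,\Delta z+c\,\Delta x=x$, using the distance $d=|x|/\sqrt{a^2+b^2+c^2}$ from the origin to that plane and the integral $\int_d^{\Upsilon}\pi(\Upsilon^2-t^2)\,dt$, and then differentiates to obtain the PDF; the mean and variance are stated to follow from the PDF. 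Your route instead invokes rotational invariance of the uniform law on the ball to reduce immediately to the one-dimensional marginal $v_1$ with density $\tfrac34(1-t^2)$ on $[-1,1]$, and then rescales. The two arguments are equivalent---the paper's cap-volume integral is exactly your cross-sectional-area computation viewed through the CDF rather than the PDF---but your formulation is a little more streamlined: it bypasses the piecewise CDF bookkeeping (the paper treats $x\le 0$ and $x>0$ separately before merging them) and yields the moments with no further integration via $\mathbb{E}\{v_1^2\}=\tfrac15$. Your closing remark about the plane-intersection viewpoint is precisely the paper's computation, and your note on the typo in Proposition~\ref{p1} (the third term should carry $\hat\vartheta_{\text{x-I2U},mk}\,\Delta x_{\text{U},k}$) is well taken and consistent with how $\Phi_{\text{y-I2U},mk}$ is defined.
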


 \begin{proof}
See Appendix \ref{At1}.
 \end{proof}

 \begin{remark}
Theorem \ref{t1} indicates that  angle estimation accuracy is greatly affected by the user location uncertainty as well as the distance between the IRS and the user. Specifically,  the variance of the angle estimation error
is proportional to  the ratio $\frac{\Upsilon^2}{\hat{d}^2_{\text{I2U},mk}}$, implying that increasing the distance between the IRS and the user can compensate for the adverse effect of user location uncertainty.
 \end{remark}

 \section{Design of transmit and phase shift beams} \label{s3}
After obtaining the angle information, the BS uses the estimated angles to design transmit and phase shift beams. Since transmit and phase shift beams are coupled, the resultant optimization problem is non-convex, hence the global optimal solution is in general intractable. As such, many sub-optimal algorithms have been proposed such as alternating optimization. However, due to the large number of reflecting elements, the computational complexity of these algorithms is very high. Motivated by this, instead of aiming for the global optimal solution, we perform low-complexity local optimization at the BS and each IRS, separately, where closed-form solutions can be obtained.

Specifically, the BS utilizes the angle information of the BS-IRS link to design the transmit beam. Without loss of generality, we assume the $k$-th user is assisted by the $k$-th IRS. Thus, the transmit beam for the $k$-th user should be aligned to the $k$-th IRS. As such, the transmit beam is designed as
\begin{align}
{\bf w}_{k}=\sqrt{\frac{\eta_k \rho_d}{N}} {\bf{a}}_{\text{B2I},k}^*,
\end{align}
where ${\bf{a}}_{\text{B2I},k}\triangleq {\bf a} \left( \vartheta_{\text{y-B2I},k} \right){  \in \mathbb{C}^{N \times 1} }$, $\rho_d$ is the transmit power of the BS and $0 <\eta_k<1$ is the power control coefficient.

The estimated angle information of the IRS-user link is used to design the phase shift beam. For the $k$-th user, we aim to maximize its received signal via the $k$-th IRS by optimizing the phase shift beam $\bm {\xi}_k$.
Specifically, the optimization problem is formulated as
\begin{align} \label{Ebeam2}
  \begin{array}{ll}
    \max\limits_{{\bm \Theta}_k}
& \left| {\bf g}_{\text{I2U},kk}^T { \bm \Theta_{k}} {\bf G}_{\text{B2I},k} {\bf w}_k \right|^2,  \\
 \operatorname{s.t.} & \begin{array}[t]{lll}
             \left|{\left[ {\bf \Theta}_k \right]}_{ii}\right| =1,i=1,...,M.
           \end{array}
  \end{array}
\end{align}

According to the rule that ${\bf E}^T {\bf X} {\bf F} = {\bf x}^T \left( {\bf E} \odot {\bf F} \right)$ with ${\bf X}={\text {diag}} \left({\bf x} \right)$, we have
 \begin{align} \label{Ebeam}
  {\bf g}_{\text{I2U},kk}^T { \bm \Theta_{k}} {\bf G}_{\text{B2I},k} {\bf w}_k={ \bm \xi_{k}^T} \left({\bf g}_{\text{I2U},kk} \odot {\bf G}_{\text{B2I},k} {\bf w}_k \right)={ \bm \xi_{k}^T} \left({\bf g}_{\text{I2U},kk} \odot {\bf b}_{\text{B2I},k} \right) {\bf a}^T_{\text{B2I},k} {\bf w}_k,
  \end{align}
  where ${\bf b}_{\text{B2I},k}\triangleq {\bf b} \left( \vartheta_{\text{y-B2Ia},k} \right) {  \in \mathbb{C}^{M \times 1} } $.

 Based on the above equation, the objective function in (\ref{Ebeam2}) becomes
\begin{align}
  \left|{ \bm \xi_{k}^T} \left({\bf g}_{\text{I2U},kk} \odot {\bf b}_{\text{B2I},k} \right) \right|^2 \left|{\bf a}^T_{\text{B2I},k} {\bf w}_k \right|^2.
\end{align}

Noticing that $\left|{\bf a}^T_{\text{B2I},k} {\bf w}_k \right|^2$ is a constant independent of ${\bm \xi}_k$ , the optimization problem is equivalent to
 \begin{align}
  \begin{array}{ll}
    \max\limits_{{\bm \xi}_k^T}
&\left| { \bm \xi_{k}^T } \left({\bf g}_{\text{I2U},kk} \odot {\bf b}_{\text{B2I},k} \right)\right|^2,  \\
 \operatorname{s.t.} & \begin{array}[t]{lll}
             \left| \left[ {\bm \xi}_{k} \right]_{i}\right| =1,i=1,...,M.
           \end{array}
  \end{array}
\end{align}

Obviously, the solution for the above optimization problem is
\begin{align}
 {\bm \xi}_k=\left({\bf g}_{\text{I2U},kk} \odot {\bf b}_{\text{B2I},k} \right)^*.
\end{align}

Using the estimated angles, we design the phase shift beam  as
\begin{align}
    {\bm \xi}_k=\left({\hat{\bar{\bf g}}}_{\text{I2U},kk} \odot {\bf b}_{\text{B2I},k} \right)^*,
\end{align}
where ${\hat{\bar{\bf g}}}_{\text{I2U},mk} = {\bf b} \left(\hat\vartheta_{\text{y-I2U},mk}\right) { \in \mathbb{C}^{M \times 1} } $.
%

\section{Achievable Rate Analysis} \label{s4}
In this section, we present a detailed investigation on the achievable rate of the multi-IRS system.  Without loss of generality, let us focus on the achievable rate of the $k$-th user. We consider the realistic case where the $k$-th user does not have access to the instantaneous CSI of the effective channel gain.  As such, we can rewrite ${{y}}_{k}$ as
 \begin{align}
      y_k=\underbrace{ \mathbb{E}\left\{  {\bf g}_{k}^T   {\bf w}_k \right\} s_k}_{\text{desired signal}}
+\underbrace{{\left({\bf g}_{k}^T   {\bf w}_k - \mathbb{E}\left\{  {\bf g}_{k}^T   {\bf w}_k  \right\} \right)}s_k}_{\text{leakage signal}}
 +\underbrace{{\bf g}_{k}^T \sum\limits_{i \ne k}^{K} {\bf w}_i s_i}_{\text{inter-user interference}}
   +\underbrace{n_k}_{\text{noise}},
 \end{align}
 where we define the equivalent channel of the $k$-the user as
\begin{align}
     {\bf g}_k^T= \sum\limits_{m=1}^{K}
    {\bf g}_{\text{I2U},mk}^T { \bm \Theta_{m}} {\bf G}_{\text{B2I},m}.
\end{align}

 Invoking the result in \cite{marzetta2016fundamentals}, the achievable rate  of the $k$-th user is given by
 \begin{align}
 R_k=\log_2 \left( 1+\frac{A_k}{B_k+\sum\limits_{i\ne k}^{K} C_{k,i}+\sigma_0^2 } \right),
 \end{align}
 where
 \begin{align}
    & A_k \triangleq { \left|\mathbb{E}\left\{  {\bf g}_{k}^T   {\bf w}_k \right\}\right|}^2,\\
    & B_k \triangleq \mathbb{E} \left\{ {\left| {{{\bf g}_{k}^T   {\bf w}_k - \mathbb{E}\left\{  {\bf g}_{k}^T   {\bf w}_k  \right\} }} \right|}^2 \right\},\\
    &C_{k,i}\triangleq  \mathbb{E}\left\{  {\left| {{\bf g}_{k}^T {\bf w}_i } \right|}^2\right\},
 \end{align}
 denote the desired signal power, leakage power and interference, respectively.

 To derive the expression of $R_k$, we first give the following Lemma:
 \begin{lemma} \label{L2}
 The correlation coefficient between $\varrho_{\text{I2U},mk,s}  \triangleq e^{j \pi
 \left(s-1\right)\epsilon_{\text{y-I2U},mk} }$ and $ \varrho_{\text{I2U},nk,l}, n\ne m$,  is given by
\begin{align}
& \zeta_{\text{y-I2U},mk,nk,sl}
 \triangleq
  \mathbb{E}\left\{
   \varrho_{\text{I2U},mk,s}
  \varrho_{\text{I2U},nk,l}^* \right\},\\
&=
\begin{cases}
1, & s=1\ \text{and} \  l=1\\
 \frac{3}{\varpi_{ k,mn,sl}^2} \left(
\frac{\sin \varpi_{ k,mn,sl} }{\varpi_{ k,mn,sl}} -\cos\varpi_{ k,mn,sl} \right), \nonumber & \text{else}
\end{cases}
  \end{align}
 where
\begin{align}
 &\varpi_{ k,mn,sl} \triangleq \pi \Upsilon \sqrt { a^2_{k,mn,sl} +b^2_{k,mn,sl}+c^2_{k,mn,sl} }, \\
  & a_{k,mn,sl} \triangleq
   \left(s-1\right) \frac{ \hat\vartheta^2_{\text{y-I2U},mk}-1}{\hat d_{\text{I2U},mk}} - \left(l-1\right) \frac{ \hat\vartheta^2_{\text{y-I2U},nk}-1}{\hat d_{\text{I2U},nk}}, \\
  & b_{k,mn,sl} \triangleq
    \left(s-1\right)\frac{\hat\vartheta_{\text{y-I2U},mk} \hat\vartheta_{\text{z-I2U},mk}}{\hat d_{\text{I2U},mk}}- \left(l-1\right)\frac{\hat\vartheta_{\text{y-I2U},nk} \hat\vartheta_{\text{z-I2U},nk}}{\hat d_{\text{I2U},nk}},\\
    &c_{k,mn,sl}\triangleq
    \left(s-1\right)  \frac{\hat\vartheta_{\text{y-I2U},mk}\hat\vartheta_{\text{x-I2U},mk} }{\hat d_{\text{I2U},mk}}- \left(l-1\right)  \frac{\hat\vartheta_{\text{y-I2U},nk}\hat\vartheta_{\text{x-I2U},nk} }{\hat d_{\text{I2U},nk}}.
\end{align}
 \end{lemma}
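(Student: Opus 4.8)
The plan is to reduce the evaluation of $\zeta_{\text{y-I2U},mk,nk,sl}$ to computing the characteristic function of a uniform distribution on a ball in $\mathbb{R}^{3}$. The starting observation is that, by Proposition~\ref{p1}, the two error terms $\epsilon_{\text{y-I2U},mk}$ and $\epsilon_{\text{y-I2U},nk}$ are linear functionals of the \emph{same} location-error vector $(\Delta x_{\text{U},k},\Delta y_{\text{U},k},\Delta z_{\text{U},k})$, since these errors depend on the user index $k$ only and not on the assisting IRS; this common dependence is precisely what makes the two phase terms correlated. Substituting these expressions into $\varrho_{\text{I2U},mk,s}\varrho_{\text{I2U},nk,l}^{*}=\exp\!\big(j\pi\big[(s-1)\epsilon_{\text{y-I2U},mk}-(l-1)\epsilon_{\text{y-I2U},nk}\big]\big)$ and grouping the coefficients of $\Delta x_{\text{U},k}$, $\Delta y_{\text{U},k}$, $\Delta z_{\text{U},k}$ recovers precisely the quantities $c_{k,mn,sl}$, $a_{k,mn,sl}$, $b_{k,mn,sl}$ defined in the statement, yielding $\varrho_{\text{I2U},mk,s}\varrho_{\text{I2U},nk,l}^{*}=\exp\!\big(j\pi(a_{k,mn,sl}\Delta y_{\text{U},k}+b_{k,mn,sl}\Delta z_{\text{U},k}+c_{k,mn,sl}\Delta x_{\text{U},k})\big)$.

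I would then take the expectation over the randomness. By the system model the true user position is uniform in the sphere of radius $\Upsilon$ centered at the GPS estimate, so the error vector $\bm{\Delta}_k\triangleq(\Delta x_{\text{U},k},\Delta y_{\text{U},k},\Delta z_{\text{U},k})$ is uniformly distributed over the ball of radius $\Upsilon$ in $\mathbb{R}^{3}$. Hence $\zeta_{\text{y-I2U},mk,nk,sl}=\mathbb{E}\{e^{j\langle\bm{t},\bm{\Delta}_k\rangle}\}$ with $\bm{t}=\pi(c_{k,mn,sl},a_{k,mn,sl},b_{k,mn,sl})$, i.e.\ the characteristic function of the uniform law on the ball, which by rotational invariance of that law depends on $\bm{t}$ only through $\|\bm{t}\|=\pi\sqrt{a_{k,mn,sl}^{2}+b_{k,mn,sl}^{2}+c_{k,mn,sl}^{2}}=\varpi_{k,mn,sl}$.

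It then remains to evaluate this characteristic function explicitly, which is a standard computation: writing the integral in spherical coordinates with the polar axis along $\bm{t}$, the angular part gives $\int_{0}^{\pi}e^{j\|\bm{t}\|r\cos\theta}\sin\theta\,d\theta=\frac{2\sin(\|\bm{t}\|r)}{\|\bm{t}\|r}$, after which the radial integral $\int_{0}^{\Upsilon}r\sin(\|\bm{t}\|r)\,dr$ is elementary; dividing by the ball volume $\frac{4}{3}\pi\Upsilon^{3}$ produces $\frac{3}{\varpi_{k,mn,sl}^{2}}\big(\frac{\sin\varpi_{k,mn,sl}}{\varpi_{k,mn,sl}}-\cos\varpi_{k,mn,sl}\big)$. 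The degenerate case $s=l=1$ is handled separately: there $a_{k,mn,sl}=b_{k,mn,sl}=c_{k,mn,sl}=0$, so $\varrho_{\text{I2U},mk,1}\varrho_{\text{I2U},nk,1}^{*}\equiv1$ and $\zeta=1$, consistent with the $\varpi\to0$ limit of the general formula. I do not expect a genuine obstacle here; the only care needed is bookkeeping — in particular matching the coefficient $c_{k,mn,sl}$ to the $\hat\vartheta_{\text{x-I2U}}$-proportional term that must appear in $\epsilon_{\text{y-I2U},mk}$ (the last term printed in Proposition~\ref{p1} appears to contain a typo), and verifying that the exponent in the product of the two $\varrho$'s is genuinely linear in $\bm{\Delta}_k$ so that the ball characteristic function applies directly.
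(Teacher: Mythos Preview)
Your proposal is correct and follows essentially the same route as the paper: both arguments rewrite $(s-1)\epsilon_{\text{y-I2U},mk}-(l-1)\epsilon_{\text{y-I2U},nk}$ as a single linear form $a_{k,mn,sl}\Delta y_{\text{U},k}+b_{k,mn,sl}\Delta z_{\text{U},k}+c_{k,mn,sl}\Delta x_{\text{U},k}$ in the common location-error vector and then exploit the spherical symmetry of the uniform ball of radius $\Upsilon$. The only cosmetic difference is that the paper first passes through the marginal PDF of this linear form (reusing the plane-slicing argument of Theorem~\ref{t1}) and then integrates $e^{j\pi x}$ against it, whereas you compute the three-dimensional characteristic function directly in spherical coordinates; the resulting formula and the handling of the $s=l=1$ case are identical.
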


 \begin{proof}
 See Appendix \ref{AL2}.
 \end{proof}

 \begin{theorem} \label{t2}
The achievable rate of the $k$-th user is given by
\begin{align}
 R_k=\log_2 \left( 1+\frac{A_k}{B_k+\sum\limits_{i\ne k}^{K} C_{k,i}+\sigma_0^2 } \right),
 \end{align}
 where
 \begin{align}
    & A_k=\frac{\eta_k \rho_d}{N } \left| \sum\limits_{m= 1}^{K}  \sum\limits_{s=1}^{M} \sqrt{\beta_{\text{B2I2U},mk}}
 {\bf a}^T_{\text{B2I},m} {\bf a}_{\text{B2I},k}^*
   \zeta_{\text{y-I2U},mk,s}
{\left[ {\hat{\bar{\bf g}}}_{\text{I2U},mk} \right]}_s  {\left[{\hat{\bar{\bf g}}}_{\text{I2U},mm} \right]}_s^*
   \right|^2,\\
    &  C_{k,i}=\frac{\eta_i \rho_d}{N} \sum\limits_{m=1}^{K} \left(M \frac{\beta_{\text{B2I2U},mk} }{v_{\text{I2U},mk}} \left|{\bf a}^T_{\text{B2I},m}{\bf a}_{\text{B2I},i}^* \right|^2+
   MN\frac{\beta_{\text{B2I2U},mk}}{v_{\text{B2I},m}v_{\text{I2U},mk} }
   +MN \frac{\beta_{\text{B2I2U},mk}}{v_{\text{B2I},m}}\right)
   \\
  &+\frac{\eta_i \rho_d}{N}\sum\limits_{m=1}^{K}
  \sum\limits_{s=1}^{M}   \sum\limits_{l=1}^{M} \beta_{\text{B2I2U},mk}
  \left| {\bf a}^T_{\text{B2I},m} {\bf a}_{\text{B2I},i}^* \right|^2
 \zeta_{\text{y-I2U},mk,sl}
{\left[ {\hat{\bar{\bf g}}}_{\text{I2U},mk} \right]}_s  {\left[{\hat{\bar{\bf g}}}_{\text{I2U},mm} \right]}_s^*
{\left[ {\hat{\bar{\bf g}}}_{\text{I2U},mk} \right]}_l^*  {\left[{\hat{\bar{\bf g}}}_{\text{I2U},mm} \right]}_l
\nonumber \\
&+\frac{\eta_i \rho_d}{N}\sum\limits_{m=1}^{K} \sum\limits_{n\ne m}^{K}
   \sum\limits_{s=1}^{M}   \sum\limits_{l=1}^{M}
   {\bf a}^T_{\text{B2I},m} {\bf a}_{\text{B2I},i}^* \sqrt{\beta_{\text{B2I2U},mk}}  {\bf a}^H_{\text{B2I},n}  {\bf a}_{\text{B2I},i} \sqrt{\beta_{\text{B2I2U},nk}}
\nonumber \\
&\times
\zeta_{\text{y-I2U},mk,nk,sl}
{\left[ {\hat{\bar{\bf g}}}_{\text{I2U},mk} \right]}_s  {\left[{\hat{\bar{\bf g}}}_{\text{I2U},mm} \right]}_s^*
{\left[ {\hat{\bar{\bf g}}}_{\text{I2U},nk} \right]}_l^*  {\left[{\hat{\bar{\bf g}}}_{\text{I2U},nn} \right]}_l \nonumber,  \\
 & B_k =C_{k,k}-A_k,
 \end{align}
 where  $\beta_{\text{B2I2U},mk} \triangleq \frac{\alpha_{\text{B2I},m} v_{\text{B2I},m} \alpha_{\text{I2U},mk} v_{\text{I2U},mk} }{  (v_{\text{B2I},m}+1 ) (v_{\text{I2U},mk}+1 )}$ and
\begin{align}
\zeta_{\text{y-I2U},mk,s}\! =\!
&\begin{cases}
 \frac{3}{\varpi_{ \text{y-I2U},mk,s}^2} \left(
\frac{\sin \varpi_{  \text{y-I2U},mk,s} }{\varpi_{  \text{y-I2U},mk,s}} -\cos\varpi_{  \text{y-I2U},mk,s} \right) & s \!\ne\! 1\\
1& s\! =\!1
\end{cases},\\
\zeta_{\text{y-I2U},mk,sl}\! =\!
&\begin{cases}
\frac{3}{\varpi_{ \text{y-I2U},mk,sl}^2} \left(
\frac{\sin \varpi_{  \text{y-I2U},mk,sl} }{\varpi_{  \text{y-I2U},mk,sl}} -\cos\varpi_{  \text{y-I2U},mk,sl} \right) & s \!\ne\! l\\
1& s\! =\!l
\end{cases},
\end{align}
with
$
   \varpi_{\text{y-I2U},mk,s} \triangleq \frac{\pi \left(s-1\right) {\Phi}_{\text{y-I2U},mk} \Upsilon}{ {\hat{d}}_{\text{I2U},mk}}
$
and $
    \varpi_{\text{y-I2U},mk,sl} \triangleq \frac{\pi \left(s-l\right){\Phi}_{\text{y-I2U},mk} \Upsilon}{ {\hat{d}}_{\text{I2U},mk}}
$.
\end{theorem}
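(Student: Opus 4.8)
The plan is to evaluate $A_k$, $C_{k,i}$ and $B_k$ directly from their definitions. First I substitute ${\bf w}_i=\sqrt{\eta_i\rho_d/N}\,{\bf a}_{\text{B2I},i}^{*}$ and the designed phase shift beam ${\bm\xi}_m=({\hat{\bar{\bf g}}}_{\text{I2U},mm}\odot{\bf b}_{\text{B2I},m})^{*}$ into ${\bf g}_k^T{\bf w}_i=\sum_{m=1}^{K}{\bf g}_{\text{I2U},mk}^T{\bm\Theta}_m{\bf G}_{\text{B2I},m}{\bf w}_i$ and apply the Hadamard identity used before (\ref{Ebeam}), i.e.\ ${\bf g}_{\text{I2U},mk}^T{\bm\Theta}_m{\bf G}_{\text{B2I},m}{\bf w}_i={\bm\xi}_m^T({\bf g}_{\text{I2U},mk}\odot{\bf G}_{\text{B2I},m}{\bf w}_i)$. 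Splitting each hop into its LOS and NLOS part and using $\bar{\bf G}_{\text{B2I},m}={\bf b}_{\text{B2I},m}{\bf a}_{\text{B2I},m}^T$ writes ${\bf g}_k^T{\bf w}_i$ as a sum over $m$ of four terms --- LOS--LOS, LOS$_{\text{I2U}}$--NLOS$_{\text{B2I}}$, NLOS$_{\text{I2U}}$--LOS$_{\text{B2I}}$ and NLOS--NLOS --- each weighted by the product of the two hop coefficients, which is $\sqrt{\beta_{\text{B2I2U},mk}}$ for LOS--LOS and $\sqrt{\beta_{\text{B2I2U},mk}/v_{\text{B2I},m}}$, $\sqrt{\beta_{\text{B2I2U},mk}/v_{\text{I2U},mk}}$, $\sqrt{\beta_{\text{B2I2U},mk}/(v_{\text{B2I},m}v_{\text{I2U},mk})}$ for the mixed and NLOS--NLOS terms. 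I will use repeatedly that $\|{\bf w}_i\|^2=\eta_i\rho_d$, that $|[{\bf b}_{\text{B2I},m}]_s|=1$ so the BS-to-IRS array factor cancels against ${\bm\xi}_m$ inside the Hadamard product, and that $[\bar{\bf g}_{\text{I2U},mk}]_s=[{\hat{\bar{\bf g}}}_{\text{I2U},mk}]_s\,\varrho_{\text{I2U},mk,s}$ with $\varrho_{\text{I2U},mk,s}=e^{j\pi(s-1)\epsilon_{\text{y-I2U},mk}}$ and $\epsilon_{\text{y-I2U},mk}=\vartheta_{\text{y-I2U},mk}-\hat\vartheta_{\text{y-I2U},mk}$ from Proposition~\ref{p1}.

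For $A_k=|\mathbb{E}\{{\bf g}_k^T{\bf w}_k\}|^2$, all NLOS contributions vanish in expectation because $\widetilde{\bf G}_{\text{B2I},m}$ and $\widetilde{\bf g}_{\text{I2U},mk}$ are zero-mean and independent of the location error, so only the LOS--LOS term survives. The IRS array factors cancel, leaving $\sum_s[{\hat{\bar{\bf g}}}_{\text{I2U},mm}]_s^{*}[{\hat{\bar{\bf g}}}_{\text{I2U},mk}]_s\,\varrho_{\text{I2U},mk,s}$, whose average over the location error replaces $\varrho_{\text{I2U},mk,s}$ by $\zeta_{\text{y-I2U},mk,s}=\mathbb{E}\{\varrho_{\text{I2U},mk,s}\}$ --- the single-index specialization of Lemma~\ref{L2}. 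Combined with ${\bf a}_{\text{B2I},m}^T{\bf w}_k=\sqrt{\eta_k\rho_d/N}\,{\bf a}_{\text{B2I},m}^T{\bf a}_{\text{B2I},k}^{*}$ this yields the claimed $A_k$.

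For $C_{k,i}=\mathbb{E}\{|{\bf g}_k^T{\bf w}_i|^2\}$ I expand the squared modulus into a double sum over IRS indices $m,n$ and over the four LOS/NLOS parts of each. By independence and zero mean, every cross term that involves an NLOS factor paired with a different factor, and every cross term between NLOS factors of distinct IRSs, drops out; isolating the surviving terms is the step I expect to be the main obstacle, since one must sort through roughly $(4K)^2$ products and check their pairings. What remains is: (i) for each $m$, the three NLOS diagonal second moments, computed from $\mathbb{E}\{\widetilde{\bf G}_{\text{B2I},m}^H\widetilde{\bf G}_{\text{B2I},m}\}=M{\bf I}_N$ and $\mathbb{E}\{\widetilde{\bf g}_{\text{I2U},mk}\widetilde{\bf g}_{\text{I2U},mk}^H\}={\bf I}_M$, giving the $MN$ terms and the $M|{\bf a}_{\text{B2I},m}^T{\bf a}_{\text{B2I},i}^{*}|^2$ term of the first line of $C_{k,i}$; (ii) for each $m$, the LOS--LOS self term, whose $|\cdot|^2$ gives a double sum over $s,l$ with $\mathbb{E}\{\varrho_{\text{I2U},mk,s}\varrho_{\text{I2U},mk,l}^{*}\}=\zeta_{\text{y-I2U},mk,sl}$ (the $m=n$ case of Lemma~\ref{L2}); and (iii) for $m\neq n$, the LOS--LOS cross term, whose average over the common location error $\Delta{\bf U}_k$ produces exactly $\zeta_{\text{y-I2U},mk,nk,sl}$ from Lemma~\ref{L2} together with the factor ${\bf a}_{\text{B2I},m}^T{\bf a}_{\text{B2I},i}^{*}\,{\bf a}_{\text{B2I},n}^H{\bf a}_{\text{B2I},i}$. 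Attaching the corresponding $\beta$-weights and the common prefactor $\eta_i\rho_d/N$ reproduces the four-line formula for $C_{k,i}$.

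Finally, $B_k$ needs no extra work: since $B_k=\mathbb{E}\{|{\bf g}_k^T{\bf w}_k|^2\}-|\mathbb{E}\{{\bf g}_k^T{\bf w}_k\}|^2$ we get $B_k=C_{k,k}-A_k$, and substituting $A_k$, $B_k$ and the $C_{k,i}$ into the rate expression of \cite{marzetta2016fundamentals} completes the proof. The only non-routine inputs are Lemma~\ref{L2} (and its single-index version) for the angle-error averaging and the two Gaussian moment identities above; everything else is algebraic collection of terms.
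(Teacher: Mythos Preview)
Your proposal is correct and follows essentially the same route as the paper's proof: the paper also splits ${\bf g}_k^T{\bf w}_i$ into LOS/NLOS contributions via the Hadamard identity, keeps only the LOS--LOS term for $A_k$, expands $C_{k,i}$ as a double sum over IRS indices and separates the $m=n$ diagonal (three NLOS second moments plus the LOS--LOS self term with $\zeta_{\text{y-I2U},mk,sl}$) from the $m\neq n$ LOS--LOS cross term with $\zeta_{\text{y-I2U},mk,nk,sl}$, and then sets $B_k=C_{k,k}-A_k$. The only cosmetic difference is that the paper obtains $\zeta_{\text{y-I2U},mk,s}$ by integrating against the PDF of Theorem~\ref{t1} rather than as the $l=1$ specialization of Lemma~\ref{L2}; the computations are the same.
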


\begin{proof}
See Appendix  \ref{At2}.
\end{proof}

Theorem \ref{t2} presents a  closed-form expression for the achievable rate,  which quantifies the impacts of some key parameters, such as antenna number, user number, the number of reflecting elements,  Rician K-factor and user location uncertainty.
For instance, $R_k$ is an increasing function with respect to the Rician K-factor, because a small Rician K-factor implies more interference caused by NLOS paths. Also, increasing the number of users would degrade the individual rate, due to stronger inter-user interference.
In addition, as $\Upsilon$ becomes larger, $\zeta_{\text{y-I2U},mk,s} \to 0$ and thus $R_{k}$ gradually goes to zero, indicating that the achievable rate would  significantly degrade as user location uncertainty becomes larger.
Moreover, we can see that $C_{k,i}$ increases with the correlation coefficient between ${\bf a}_{\text{B2I},i}$ and ${\bf a}_{\text{B2I},m}, m\ne i$,
implying that IRSs should be  deployed
at distinct directions relative to the BS
to reduce the interference from multiple IRSs. Ideally, it is desired that ${\bf a}_{\text{B2I},m}^T {\bf a}_{\text{B2I},i}^* \to 0, m \ne i$.

We now consider some special cases, where simplified expressions are available.
\subsection{Ideal directions (relative to the BS) of IRSs }
\begin{proposition} \label{p2}
To reduce the interference from multiple IRSs, ideally
the directions  (relative to the BS) of any two IRSs should  satisfy
$\left| \vartheta_{\text{B2I},m} -\vartheta_{\text{B2I},i} \right|= \frac{2n}{N}, i\ne m, n \in \{1,...,N-1\}$.
\end{proposition}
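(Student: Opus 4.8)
The plan is to read off from Theorem~\ref{t2} which pieces of the interference $C_{k,i}$ the IRS placement can actually affect. Inspecting the four groups of terms, the parts $MN\beta_{\text{B2I2U},mk}/(v_{\text{B2I},m}v_{\text{I2U},mk})$ and $MN\beta_{\text{B2I2U},mk}/v_{\text{B2I},m}$, together with the entire contribution of the self-index $m=i$, are fixed once the users, IRSs and channel statistics are given; they form an irreducible ``floor'' of $C_{k,i}$. Every other term carries a factor $|{\bf a}^T_{\text{B2I},m}{\bf a}^*_{\text{B2I},i}|$ (or, in the triple sum, a product $|{\bf a}^T_{\text{B2I},m}{\bf a}^*_{\text{B2I},i}|\,|{\bf a}^H_{\text{B2I},n}{\bf a}_{\text{B2I},i}|$ with $n\ne m$), always with an index $\ne i$ on one of the two array vectors. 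Since $C_{k,i}\ge 0$ and these cross inner products are the only placement-dependent handles, the interference is driven to its floor exactly when all of them vanish, i.e. when ${\bf a}^T_{\text{B2I},m}{\bf a}^*_{\text{B2I},i}=0$ for all $m\ne i$; this is the condition already flagged after Theorem~\ref{t2}, and the task reduces to expressing it through the angles.

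Next I would evaluate that inner product in closed form. Using $a_n=e^{j\pi(n-1)\vartheta}$ and writing $\Delta_{mi}\triangleq\vartheta_{\text{B2I},m}-\vartheta_{\text{B2I},i}$ for the difference of the effective BS-to-IRS angles, the sum is a geometric series,
\begin{align}
{\bf a}^T_{\text{B2I},m}{\bf a}^*_{\text{B2I},i}=\sum_{n=1}^{N}e^{j\pi(n-1)\Delta_{mi}}=e^{j\pi(N-1)\Delta_{mi}/2}\,\frac{\sin\!\left(\pi N\Delta_{mi}/2\right)}{\sin\!\left(\pi\Delta_{mi}/2\right)},
\end{align}
a Dirichlet-kernel expression. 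It equals zero precisely when the numerator vanishes while the denominator does not, i.e. when $N\Delta_{mi}/2$ is an integer but $\Delta_{mi}/2$ is not; equivalently $\Delta_{mi}=2n/N$ for some integer $n$ with $N\nmid n$.

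It then remains to account for the admissible range of the effective angles. Since $\vartheta_{\text{B2I},m}$ is (up to the convention $d_{\text{BS}}=\lambda/2$) a direction cosine, $\vartheta_{\text{B2I},m}\in[-1,1]$ and hence $|\Delta_{mi}|\le 2$, so that $|n|\le N$; the endpoints $n=\pm N$ give $\Delta_{mi}=\pm 2$, where $\sin(\pi\Delta_{mi}/2)=0$ as well and the kernel is nonzero, so they are excluded. Therefore the vanishing condition reads $|\vartheta_{\text{B2I},m}-\vartheta_{\text{B2I},i}|=2n/N$ with $n\in\{1,\dots,N-1\}$ for every pair $m\ne i$, which is the assertion. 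The step I expect to be the real obstacle is the first one: rigorously establishing that zeroing the cross inner products \emph{minimises} $C_{k,i}$, because several of the coefficients multiplying $|{\bf a}^T_{\text{B2I},m}{\bf a}^*_{\text{B2I},i}|^2$ (through the factors $\zeta_{\text{y-I2U},mk,sl}$ and the unit-modulus products of $\hat{\bar{\bf g}}$ entries) are sign-indefinite, so one cannot argue term-by-term monotonicity; the clean way around this is to bound the magnitude of each placement-dependent term by a multiple of the corresponding cross inner product(s) and invoke $C_{k,i}\ge 0$, so that the floor is a genuine lower bound, attained exactly in the ideal configuration.
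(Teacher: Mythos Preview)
Your approach is essentially the same as the paper's: compute the Dirichlet-kernel form of ${\bf a}^T_{\text{B2I},m}{\bf a}^*_{\text{B2I},i}$ and read off the zeros, which is exactly what the paper does in its two-line proof. The paper does not attempt the minimisation argument you flag as the ``real obstacle''; it simply takes ${\bf a}^T_{\text{B2I},m}{\bf a}^*_{\text{B2I},i}=0$ for $m\ne i$ as the \emph{ideal} target (as already motivated in the remarks after Theorem~\ref{t2}) and characterises when it holds, so your extra analysis of the $C_{k,i}$ structure and the range of $n$ goes beyond what the paper proves.
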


\begin{proof}
Let ${\bf a}_{\text{B2I},m}^T {\bf a}_{\text{B2I},i}^* = 0, m \ne i$. Noticing that
\begin{align}
{\bf a}_{\text{B2I},m}^T {\bf a}_{\text{B2I},i} = \frac{\sin \frac{N \pi (\vartheta_{\text{B2I},m} -\vartheta_{\text{B2I},i})}{2} }{\sin \frac{ \pi (\vartheta_{\text{B2I},m} -\vartheta_{\text{B2I},i})}{2} }
e^{j \pi (N-1) \frac{\vartheta_{\text{B2I},m} -\vartheta_{\text{B2I},i}}{2}},
\end{align}
we can obtain the desired result.
\end{proof}

 Proposition \ref{p2} indicates that it is possible to eliminate the interference from other IRSs by deploying them at proper directions (relative to the BS).

\begin{corollary} \label{c1}
With ideal directions (relative to the BS) of IRSs, i.e., ${\bf a}_{\text{B2I},m}^T {\bf a}_{\text{B2I},i} \to 0, m \ne i$, the achievable rate is given by
\begin{align}
 R_k=\log_2 \left( 1+\frac{A_k}{B_k+\sum\limits_{i\ne k}^{K} C_{k,i}+\sigma_0^2 } \right),
 \end{align}
 where
 \begin{align}
    & A_k=N \eta_k \rho_d \beta_{\text{B2I2U},kk}  \left|   \sum\limits_{s=1}^{M}
   \zeta_{\text{y-I2U},kk,s}
   \right|^2,\\
    &  C_{k,i}=
   N  M \eta_i \rho_d \frac{\beta_{\text{B2I2U},ik} }{v_{\text{I2U},ik}}+
     M \eta_i \rho_d \sum\limits_{m=1}^{K} \left(
  \frac{\beta_{\text{B2I2U},mk}}{v_{\text{B2I},m}v_{\text{I2U},mk} }
   + \frac{\beta_{\text{B2I2U},mk}}{v_{\text{B2I},m}}\right)
   \\
  &+N \eta_i \rho_d
  \sum\limits_{s=1}^{M}   \sum\limits_{l=1}^{M} \beta_{\text{B2I2U},ik}
 \zeta_{\text{y-I2U},ik,sl}
{\left[ {\hat{\bar{\bf g}}}_{\text{I2U},ik} \right]}_s  {\left[{\hat{\bar{\bf g}}}_{\text{I2U},ii} \right]}_s^*
{\left[ {\hat{\bar{\bf g}}}_{\text{I2U},ik} \right]}_l^*  {\left[{\hat{\bar{\bf g}}}_{\text{I2U},ii} \right]}_l,
\nonumber \\
 & B_k = N  M \eta_k \rho_d \frac{\beta_{\text{B2I2U},kk} }{v_{\text{I2U},kk}}+
     M \eta_k \rho_d \sum\limits_{m=1}^{K} \left(
  \frac{\beta_{\text{B2I2U},mk}}{v_{\text{B2I},m}v_{\text{I2U},mk} }
   + \frac{\beta_{\text{B2I2U},mk}}{v_{\text{B2I},m}}\right)
   \\
  &+N \eta_k \rho_d \beta_{\text{B2I2U},kk}
  \sum\limits_{s=1}^{M}   \sum\limits_{l=1}^{M}
 \left(  \zeta_{\text{y-I2U},kk,sl} - \zeta_{\text{y-I2U},kk,s} \zeta_{\text{y-I2U},kk,l}^* \right). \nonumber
 \end{align}

\end{corollary}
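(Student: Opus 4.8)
The plan is to specialize the general expression for $R_k$ from Theorem \ref{t2} under the condition ${\bf a}_{\text{B2I},m}^T {\bf a}_{\text{B2I},i}^* \to 0$ for $m\ne i$, while noting that ${\bf a}_{\text{B2I},m}^T {\bf a}_{\text{B2I},m}^* = \|{\bf a}_{\text{B2I},m}\|^2 = N$ since each entry of the array response vector has unit modulus. First I would substitute into $A_k$: in the double sum $\sum_{m}\sum_{s}$, the factor ${\bf a}^T_{\text{B2I},m}{\bf a}^*_{\text{B2I},k}$ kills every term with $m\ne k$ and contributes a factor $N$ when $m=k$. This collapses $A_k$ to $\frac{\eta_k\rho_d}{N}\,|N\sqrt{\beta_{\text{B2I2U},kk}}\sum_{s=1}^M \zeta_{\text{y-I2U},kk,s}[{\hat{\bar{\bf g}}}_{\text{I2U},kk}]_s[{\hat{\bar{\bf g}}}_{\text{I2U},kk}]_s^*|^2$; since $[{\hat{\bar{\bf g}}}_{\text{I2U},kk}]_s$ has unit modulus, $[{\hat{\bar{\bf g}}}_{\text{I2U},kk}]_s[{\hat{\bar{\bf g}}}_{\text{I2U},kk}]_s^* = 1$, and factoring out $N^2$ and dividing by $N$ yields $A_k = N\eta_k\rho_d\,\beta_{\text{B2I2U},kk}\,|\sum_{s=1}^M\zeta_{\text{y-I2U},kk,s}|^2$, as claimed.

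Next I would treat $C_{k,i}$ term by term. In the first line of the Theorem \ref{t2} expression for $C_{k,i}$, the $|{\bf a}^T_{\text{B2I},m}{\bf a}^*_{\text{B2I},i}|^2$ factor forces $m=i$, giving $\frac{\eta_i\rho_d}{N}\cdot N\cdot M\frac{\beta_{\text{B2I2U},ik}}{v_{\text{I2U},ik}} = NM\eta_i\rho_d\frac{\beta_{\text{B2I2U},ik}}{v_{\text{I2U},ik}}$ from the first sub-term, while the remaining two sub-terms in that line have no $m=i$ restriction (they carry $MN$ without the inner-product factor), so $\sum_m \frac{\eta_i\rho_d}{N}MN(\cdots) = M\eta_i\rho_d\sum_m(\cdots)$. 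The second line (the $\sum_s\sum_l$ term with $|{\bf a}^T_{\text{B2I},m}{\bf a}^*_{\text{B2I},i}|^2$) again forces $m=i$ and the $\frac{1}{N}\cdot N$ cancellation gives the stated $N\eta_i\rho_d\sum_s\sum_l\beta_{\text{B2I2U},ik}\zeta_{\text{y-I2U},ik,sl}(\cdots)$ term. The third line (the $\sum_m\sum_{n\ne m}$ cross term) vanishes entirely, because at least one of ${\bf a}^T_{\text{B2I},m}{\bf a}^*_{\text{B2I},i}$ or ${\bf a}^H_{\text{B2I},n}{\bf a}_{\text{B2I},i}$ must be zero when $m\ne n$ — they cannot both equal $i$ simultaneously. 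Collecting these gives the displayed $C_{k,i}$. Finally, $B_k = C_{k,k} - A_k$: the first two groups of $C_{k,k}$ transcribe directly (set $i=k$), and the $\sum_s\sum_l$ term of $C_{k,k}$ minus $A_k = N\eta_k\rho_d\beta_{\text{B2I2U},kk}(\sum_s\zeta_{\text{y-I2U},kk,s})(\sum_l\zeta^*_{\text{y-I2U},kk,l})$ combines into $N\eta_k\rho_d\beta_{\text{B2I2U},kk}\sum_s\sum_l(\zeta_{\text{y-I2U},kk,sl}-\zeta_{\text{y-I2U},kk,s}\zeta^*_{\text{y-I2U},kk,l})$, using that $[{\hat{\bar{\bf g}}}]_s[{\hat{\bar{\bf g}}}]_s^* = 1$ in the $sl$-sum of $C_{k,k}$.

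The only genuine subtlety — and the step I would be most careful about — is bookkeeping which of the three sub-terms inside the first line of $C_{k,i}$ in Theorem \ref{t2} carries the $|{\bf a}^T_{\text{B2I},m}{\bf a}^*_{\text{B2I},i}|^2$ factor (only the first does) versus which carry a bare $MN$ (the latter two), since this determines whether the outer sum over $m$ survives or collapses to $m=i$. Everything else is a mechanical substitution of ${\bf a}^T_{\text{B2I},m}{\bf a}^*_{\text{B2I},i} = N\delta_{mi}$ and $|[{\hat{\bar{\bf g}}}_{\text{I2U},\cdot}]_s| = 1$ into the master formula of Theorem \ref{t2}, together with the $\frac{1}{N}$ prefactor cancelling the $N$ produced by $\|{\bf a}_{\text{B2I},\cdot}\|^2$. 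No new analytical machinery is required; the corollary is a direct corollary in the literal sense.
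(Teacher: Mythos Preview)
Your proposal is correct and follows precisely the approach the paper intends: the paper's own proof is the one-line ``Starting from Theorem \ref{t2}, we can obtain the desired result,'' and you have simply filled in the bookkeeping of that specialization. The substitutions ${\bf a}^T_{\text{B2I},m}{\bf a}^*_{\text{B2I},i}=N\delta_{mi}$ and $\bigl|[{\hat{\bar{\bf g}}}_{\text{I2U},mm}]_s\bigr|=1$, the observation that the $m\neq n$ cross term vanishes because $m$ and $n$ cannot both equal $i$, and the final regrouping $B_k=C_{k,k}-A_k$ are exactly what is required.
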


\begin{proof}
Starting from Theorem \ref{t2}, we can obtain the desired result.
\end{proof}

From Corollary \ref{c1}, we can observe that the desire signal $A_k$ is proportional to $\beta_{\text{B2I2U},kk}$, indicating that a user should be associated with the nearest IRS. Besides, it can be seen that $C_{k,i}$ increases proportionally with $\beta_{\text{B2I2U},ki}, i\ne k$, which implies that the interference from other IRSs (with indexes $i\ne k$) would be reduced if they are far away from the $k$-th user.
According to the above two observations, we conclude that  IRSs should be deployed far away from each other.
 Moreover, we can see that the desired signal power is proportional to the number of BS antennas, indicating the benefit of applying multiple antennas.

\subsection{Perfect information of user locations}
\begin{corollary} \label{c2}
With perfect user location information, i.e., $\Upsilon \to 0$, the achievable rate is given by
\begin{align}
 R_k=\log_2 \left( 1+\frac{A_k}{B_k+\sum\limits_{i\ne k}^{K} C_{k,i}+\sigma_0^2 } \right),
 \end{align}
 where
 \begin{align}
    & A_k=N M^2 \eta_k \rho_d \beta_{\text{B2I2U},kk}  ,\\
    &  C_{k,i}=
   N  M \eta_i \rho_d \frac{\beta_{\text{B2I2U},ik} }{v_{\text{I2U},ik}}+
     M \eta_i \rho_d \sum\limits_{m=1}^{K} \left(
  \frac{\beta_{\text{B2I2U},mk}}{v_{\text{B2I},m}v_{\text{I2U},mk} }
   + \frac{\beta_{\text{B2I2U},mk}}{v_{\text{B2I},m}}\right)
   \\
  &+N \eta_i \rho_d \beta_{\text{B2I2U},ik}
 \left| \frac{\sin \frac{M \pi (\hat\vartheta_{\text{I2U},ik} -\hat\vartheta_{\text{I2U},ii})}{2} }{\sin \frac{ \pi (\hat\vartheta_{\text{I2U},ik} -\hat\vartheta_{\text{I2U},ii})}{2} } \right|^2,
\nonumber \\
 & B_k = N  M \eta_k \rho_d \frac{\beta_{\text{B2I2U},kk} }{v_{\text{I2U},kk}}+
     M \eta_k \rho_d \sum\limits_{m=1}^{K} \left(
  \frac{\beta_{\text{B2I2U},mk}}{v_{\text{B2I},m}v_{\text{I2U},mk} }
   + \frac{\beta_{\text{B2I2U},mk}}{v_{\text{B2I},m}}\right).
 \end{align}

\end{corollary}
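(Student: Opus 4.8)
The plan is to derive Corollary \ref{c2} as the $\Upsilon \to 0$ limit of the expressions already established in Corollary \ref{c1}; that is, I keep the ideal-direction condition ${\bf a}^T_{\text{B2I},m}{\bf a}_{\text{B2I},i}\to 0$ ($m\ne i$) in force and only let the user-location uncertainty vanish, so that no new channel computation is needed. The first step is to identify how $\Upsilon$ enters the formulas of Corollary \ref{c1}: it appears solely through the correlation coefficients $\zeta_{\text{y-I2U},mk,s}$ and $\zeta_{\text{y-I2U},mk,sl}$, via the arguments $\varpi_{\text{y-I2U},mk,s}=\frac{\pi(s-1)\Phi_{\text{y-I2U},mk}\Upsilon}{\hat d_{\text{I2U},mk}}$ and $\varpi_{\text{y-I2U},mk,sl}=\frac{\pi(s-l)\Phi_{\text{y-I2U},mk}\Upsilon}{\hat d_{\text{I2U},mk}}$, both of which are proportional to $\Upsilon$ and hence tend to $0$ (consistent with Theorem \ref{t1}, where the estimation error has support of radius $\frac{\Upsilon}{\hat d_{\text{I2U},mk}}\Phi_{\text{y-I2U},mk}$ and variance $\frac{\Upsilon^2}{5\hat d^2_{\text{I2U},mk}}\Phi^2_{\text{y-I2U},mk}$, both collapsing to $0$). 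Using the elementary expansion $\frac{\sin x}{x}-\cos x=\frac{x^2}{3}+O(x^4)$ gives $\frac{3}{x^2}\left(\frac{\sin x}{x}-\cos x\right)\to 1$ as $x\to 0$, so $\zeta_{\text{y-I2U},mk,s}\to 1$ and $\zeta_{\text{y-I2U},mk,sl}\to 1$ for all indices; since every sum over $m,s,l$ is finite, the limits pass termwise into the expressions.

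Next I substitute these limits. In $A_k$ the factor $\left|\sum_{s=1}^{M}\zeta_{\text{y-I2U},kk,s}\right|^2$ becomes $M^2$, yielding $A_k=NM^2\eta_k\rho_d\beta_{\text{B2I2U},kk}$. In $C_{k,i}$, the first two summands do not contain $\Upsilon$ and are unchanged; in the remaining double sum, setting $\zeta_{\text{y-I2U},ik,sl}=1$ and recalling ${\hat{\bar{\bf g}}}_{\text{I2U},mk}={\bf b}(\hat\vartheta_{\text{I2U},mk})$ gives $\left[{\hat{\bar{\bf g}}}_{\text{I2U},ik}\right]_s\left[{\hat{\bar{\bf g}}}_{\text{I2U},ii}\right]_s^*=e^{j\pi(s-1)(\hat\vartheta_{\text{I2U},ik}-\hat\vartheta_{\text{I2U},ii})}$, so the sum over $s$ and $l$ factorizes into $\left|\sum_{s=1}^{M}e^{j\pi(s-1)(\hat\vartheta_{\text{I2U},ik}-\hat\vartheta_{\text{I2U},ii})}\right|^2$, a finite geometric series whose modulus is the Dirichlet kernel $\left|\frac{\sin\frac{M\pi(\hat\vartheta_{\text{I2U},ik}-\hat\vartheta_{\text{I2U},ii})}{2}}{\sin\frac{\pi(\hat\vartheta_{\text{I2U},ik}-\hat\vartheta_{\text{I2U},ii})}{2}}\right|^2$, exactly as stated. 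For $B_k$, the first two summands are likewise $\Upsilon$-free, and the double sum $\sum_{s}\sum_{l}\left(\zeta_{\text{y-I2U},kk,sl}-\zeta_{\text{y-I2U},kk,s}\zeta_{\text{y-I2U},kk,l}^*\right)\to\sum_s\sum_l(1-1)=0$, so $B_k$ reduces to the two remaining terms.

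I do not anticipate a genuine obstacle here: the argument is essentially bookkeeping that tracks which summands carry a factor of $\Upsilon$, plus the single elementary limit $\frac{3}{x^2}\left(\frac{\sin x}{x}-\cos x\right)\to 1$. The only point worth stating carefully is that exchanging the limit with the (finite) sums over $m$, $s$ and $l$ is trivially valid, and that the Dirichlet-kernel identity $\sum_{t=0}^{M-1}e^{j\pi t\psi}=e^{j\pi(M-1)\psi/2}\frac{\sin(M\pi\psi/2)}{\sin(\pi\psi/2)}$ is applied with $\psi=\hat\vartheta_{\text{I2U},ik}-\hat\vartheta_{\text{I2U},ii}$; taking $|\cdot|^2$ then removes the unimodular phase factor and gives the claimed expression.
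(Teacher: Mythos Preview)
Your proposal is correct and follows essentially the same route as the paper: both start from Corollary~\ref{c1}, establish via Taylor expansion that $\frac{3}{\varpi^2}\bigl(\frac{\sin\varpi}{\varpi}-\cos\varpi\bigr)\to 1$ so that all $\zeta$-coefficients tend to $1$ as $\Upsilon\to 0$, and then simplify. You actually supply more detail than the paper does in the last step---explicitly summing the geometric series to obtain the Dirichlet kernel in $C_{k,i}$ and showing the double sum in $B_k$ vanishes---whereas the paper simply states that ``simplifying the achievable rate given by Corollary~\ref{c1} yields the desired result.''
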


\begin{proof}
We first consider the following limit:
\begin{align} \label{E14}
    \lim_{\varpi \to 0} \zeta= \lim_{\varpi \to 0} \frac{3}{\varpi^2} \left(
\frac{\sin \varpi }{\varpi} -\cos\varpi \right).
\end{align}

 Using the Taylor expansions of $\sin \varpi $ and
 $\cos \varpi $ at $\varpi=0$, we have
 \begin{align}
     \lim_{\varpi \to 0} \sin \varpi= \varpi-\frac{1}{6}\varpi^3,\\
     \lim_{\varpi \to 0} \cos \varpi=1-\frac{1}{2}\varpi^2,
 \end{align}
 based on which, (\ref{E14}) can be expressed as
 \begin{align} \label{limit}
    \lim_{\varpi \to 0} \zeta= \lim_{\varpi \to 0} \frac{3}{\varpi^2} \left\{
\frac{\varpi-\frac{1}{6}\varpi^3}{\varpi} - \left(1-\frac{1}{2}\varpi^2\right)  \right\}=1.
\end{align}

As $\Upsilon \to \infty$, $\varpi_{\text{y-I2U},mk,s} \to 0$, $\varpi_{\text{y-I2U},mk,sl} \to 0$ and $\varpi_{k,mn,sl} \to 0$.

Invoking (\ref{limit}), we have
$\zeta_{\text{y-I2U},mk,s} \to 1$, $\zeta_{\text{y-I2U},mk,sl} \to 1$ and $\zeta_{\text{y-I2U},mk,nk,sl} \to 1$.

As such, simplifying the achievable rate given by Corollary \ref{c1} yields the desired result.
\end{proof}

Corollary \ref{c2} shows that as  $\Upsilon$ goes to zero, the achievable rate  converges to a limit, due to the vanished angle estimation error. Moreover, we can see that the desired signal power becomes proportional to $N M^2$.
The $N$-fold gain is achieved by the transmit beamforming, while the $M^2$-fold gain comes from the fact that the IRS not only achieves the phase shift beamforming gain in the IRS-user link, but also captures an inherent aperture gain by collecting more signal power in the BS-IRS link.

To gain more insights, we now look into some asymptotic regime.
\subsubsection{A large number of reflecting elements}
\begin{corollary} \label{c3}
With a large number of reflecting elements, i.e., $M \to \infty$, the achievable rate is given by
\begin{align}
 R_k=\log_2 \left( 1+\frac{N M \eta_k  \beta_{\text{B2I2U},kk}}{ N\sum\limits_{i= 1}^{K}    \eta_i  \frac{\beta_{\text{B2I2U},ik} }{v_{\text{I2U},ik}}
 +  \sum\limits_{m=1}^{K} \left(
  \frac{\beta_{\text{B2I2U},mk}}{v_{\text{B2I},m}v_{\text{I2U},mk} }
   + \frac{\beta_{\text{B2I2U},mk}}{v_{\text{B2I},m}}\right)} \right).
 \end{align}

\end{corollary}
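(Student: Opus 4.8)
The plan is to start from the closed-form expressions for $A_k$, $B_k$ and $C_{k,i}$ established in Corollary~\ref{c2} (perfect user location information, $\Upsilon\to 0$), determine the growth rate in $M$ of every constituent term, and then pass to the limit $M\to\infty$ inside the SINR $A_k/(B_k+\sum_{i\ne k}C_{k,i}+\sigma_0^2)$. Since $A_k$ grows quadratically in $M$ while the denominator grows only linearly, the quotient is asymptotically proportional to $M$, and the bounded and lower-order contributions simply drop out.

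Concretely, I would first read off the orders from Corollary~\ref{c2}. The desired signal power $A_k=NM^2\eta_k\rho_d\beta_{\text{B2I2U},kk}$ is exactly of order $M^2$, with no lower-order correction. In $C_{k,i}$ the first two summands, $NM\eta_i\rho_d\beta_{\text{B2I2U},ik}/v_{\text{I2U},ik}$ and $M\eta_i\rho_d\sum_{m=1}^{K}\!\left(\beta_{\text{B2I2U},mk}/(v_{\text{B2I},m}v_{\text{I2U},mk})+\beta_{\text{B2I2U},mk}/v_{\text{B2I},m}\right)$, are each of order $M$, whereas the third summand, $N\eta_i\rho_d\beta_{\text{B2I2U},ik}\,|\sin\frac{M\pi(\hat\vartheta_{\text{I2U},ik}-\hat\vartheta_{\text{I2U},ii})}{2}/\sin\frac{\pi(\hat\vartheta_{\text{I2U},ik}-\hat\vartheta_{\text{I2U},ii})}{2}|^2$, is an array-factor (Dirichlet-type) quantity bounded by $N\eta_i\rho_d\beta_{\text{B2I2U},ik}/\sin^2\frac{\pi(\hat\vartheta_{\text{I2U},ik}-\hat\vartheta_{\text{I2U},ii})}{2}$ and hence $\mathcal{O}(1)$ in $M$. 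Likewise $B_k$ is exactly of order $M$, and the noise term $\sigma_0^2$ is $\mathcal{O}(1)$, so the full denominator is of order $M$. Dividing numerator and denominator by $M\rho_d$ then turns the numerator into $NM\eta_k\beta_{\text{B2I2U},kk}$ and turns the denominator into the sum of the order-$M$ parts of $B_k$ and of $\sum_{i\ne k}C_{k,i}$, which assembles into $N\sum_{i=1}^{K}\eta_i\beta_{\text{B2I2U},ik}/v_{\text{I2U},ik}+\sum_{m=1}^{K}\!\left(\beta_{\text{B2I2U},mk}/(v_{\text{B2I},m}v_{\text{I2U},mk})+\beta_{\text{B2I2U},mk}/v_{\text{B2I},m}\right)$ as displayed in the statement, plus a residual of order $1/M$ stemming from the bounded array-factor terms and from $\sigma_0^2/(M\rho_d)$. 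Sending $M\to\infty$ annihilates the residual, and substituting the limiting SINR into $\log_2(1+\cdot)$ yields the claim.

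The step that needs care is the claim that the array-factor term in $C_{k,i}$ remains bounded: this holds precisely when $\hat\vartheta_{\text{I2U},ik}\ne\hat\vartheta_{\text{I2U},ii}$, i.e. when IRS~$i$ sees users~$k$ and~$i$ along distinct effective angles, which is true for any generic deployment. In the degenerate configuration where these two angles coincide the same term becomes of order $M^2$ and must be retained in the denominator, changing the limiting value; I would therefore either adopt this non-degeneracy as a standing assumption or treat it as an explicitly excluded boundary case. Everything else is a routine accounting of orders in $M$.
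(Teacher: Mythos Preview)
Your approach is essentially identical to the paper's: start from Corollary~\ref{c2} and discard every term that is $o(M)$ in the denominator (the Dirichlet array factor and $\sigma_0^2$), then divide through by $M\rho_d$. The only detail you pass over silently is that collapsing the second order-$M$ block of $B_k+\sum_{i\ne k}C_{k,i}$ into the displayed $\sum_{m=1}^{K}\bigl(\beta_{\text{B2I2U},mk}/(v_{\text{B2I},m}v_{\text{I2U},mk})+\beta_{\text{B2I2U},mk}/v_{\text{B2I},m}\bigr)$ uses $\sum_{i=1}^K\eta_i=1$; otherwise a residual factor $\sum_i\eta_i$ would remain in front of that sum.
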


\begin{proof}
Starting from Corollary \ref{c2} and neglecting the insignificant terms, the desired result can be obtained.
\end{proof}

From Corollary \ref{c3}, we can see that the achievable rate is mainly determined by the number of reflecting elements, antenna number, Rician K-factor and power allocation coefficients.
For instance, the SINR is proportional to the number of reflecting elements, indicating the benefit of deploying a large number of reflecting elements. However, in practice, to control the cost of IRS deployment, the number of reflecting elements can not be infinite. Since the achievable rate grows logarithmically with the number of reflecting elements, the benefit of increasing the number of reflecting elements becomes less significant in the regime with large number of reflecting elements. Thus, caution should be exercised for choosing the number of IRS elements in order to achieve a fine balance between the deployment cost and the achievable rate.

Also, the achievable rate is an increasing function with respect to the number of BS antennas.

\subsubsection{A large number of antennas}
\begin{corollary} \label{c4}
With a large number of antennas, i.e., $N \to \infty$, the achievable rate is given by
\begin{align}
 R_k=\log_2 \left( 1+\frac{M^2 \eta_k  \beta_{\text{B2I2U},kk} }
 { M\sum\limits_{i=1}^{K}  \eta_i  \frac{\beta_{\text{B2I2U},ik} }{v_{\text{I2U},ik}}
 +\sum\limits_{i\ne k}^{K}
 \eta_i  \beta_{\text{B2I2U},ik}
 \left| \frac{\sin \frac{M \pi (\hat\vartheta_{\text{I2U},ik} -\hat\vartheta_{\text{I2U},ii})}{2} }{\sin \frac{ \pi (\hat\vartheta_{\text{I2U},ik} -\hat\vartheta_{\text{I2U},ii})}{2} } \right|^2} \right).
 \end{align}
\end{corollary}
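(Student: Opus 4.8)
The plan is to obtain Corollary~\ref{c4} directly from the closed-form expressions for $A_k$, $B_k$ and $C_{k,i}$ in Corollary~\ref{c2} (which already incorporates the ideal-direction and perfect-location simplifications), by splitting each of these quantities into the part that is $\Theta(N)$ and the part that stays bounded as $N\to\infty$, normalizing the SINR by $N\rho_d$, and then letting $N\to\infty$.

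First I would classify each building block of $\mathrm{SINR}_k = \frac{A_k}{B_k+\sum_{i\ne k}^{K}C_{k,i}+\sigma_0^2}$ by its order in $N$ (with $M$, the power coefficients, the large-scale gains, the Rician factors and the estimated angles all held fixed). The signal power $A_k = NM^2\eta_k\rho_d\beta_{\text{B2I2U},kk}$ is $\Theta(N)$. In $B_k$, the term $NM\eta_k\rho_d\frac{\beta_{\text{B2I2U},kk}}{v_{\text{I2U},kk}}$ is $\Theta(N)$ while $M\eta_k\rho_d\sum_{m=1}^{K}\left(\frac{\beta_{\text{B2I2U},mk}}{v_{\text{B2I},m}v_{\text{I2U},mk}}+\frac{\beta_{\text{B2I2U},mk}}{v_{\text{B2I},m}}\right)$ does not depend on $N$. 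In $C_{k,i}$, both $NM\eta_i\rho_d\frac{\beta_{\text{B2I2U},ik}}{v_{\text{I2U},ik}}$ and $N\eta_i\rho_d\beta_{\text{B2I2U},ik}\left| \frac{\sin \frac{M \pi (\hat\vartheta_{\text{I2U},ik} -\hat\vartheta_{\text{I2U},ii})}{2} }{\sin \frac{ \pi (\hat\vartheta_{\text{I2U},ik} -\hat\vartheta_{\text{I2U},ii})}{2} } \right|^2$ are $\Theta(N)$ --- the Dirichlet-kernel factor is a constant fixed by the estimated angles and by $M$, so it must be kept --- whereas the $\sum_{m=1}^{K}(\cdot)$ term is again $N$-independent, as is the noise $\sigma_0^2$.

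I would then divide numerator and denominator by $N\rho_d$ and pass to the limit. All $N$-independent pieces --- the two $\sum_{m=1}^{K}(\cdot)$ ``NLOS cross-interference'' terms carried by $B_k$ and by every $C_{k,i}$, together with $\sigma_0^2$ --- become $O(1/N)$ and drop out, leaving $\frac{A_k}{N\rho_d}\to M^2\eta_k\beta_{\text{B2I2U},kk}$, $\frac{B_k}{N\rho_d}\to M\eta_k\frac{\beta_{\text{B2I2U},kk}}{v_{\text{I2U},kk}}$, and $\frac{C_{k,i}}{N\rho_d}\to M\eta_i\frac{\beta_{\text{B2I2U},ik}}{v_{\text{I2U},ik}}+\eta_i\beta_{\text{B2I2U},ik}\left| \frac{\sin \frac{M \pi (\hat\vartheta_{\text{I2U},ik} -\hat\vartheta_{\text{I2U},ii})}{2} }{\sin \frac{ \pi (\hat\vartheta_{\text{I2U},ik} -\hat\vartheta_{\text{I2U},ii})}{2} } \right|^2$. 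Recognizing that the surviving $B_k$ limit is precisely the $i=k$ summand of $M\sum_{i=1}^{K}\eta_i\frac{\beta_{\text{B2I2U},ik}}{v_{\text{I2U},ik}}$, I would merge it with the $i\ne k$ terms and substitute into $R_k=\log_2(1+\mathrm{SINR}_k)$, which yields the stated formula. There is no genuine analytical obstacle here; the step that needs care is this order counting --- in particular, \emph{not} discarding the aperture/inter-IRS Dirichlet term, which is $\Theta(N)$ before normalization (it stays bounded in $M$ for fixed angles) and therefore survives division by $N\rho_d$, in contrast to the NLOS cross terms and the noise, which are $O(1)$ and hence negligible in the limit.
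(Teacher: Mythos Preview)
Your proposal is correct and follows exactly the approach of the paper, which simply states ``starting from Corollary~\ref{c2} and ignoring the terms that do not scale with $N$, we can obtain the desired result.'' You have merely spelled out in detail the order counting and the normalization by $N\rho_d$ that the paper leaves implicit; the identification of the surviving $\Theta(N)$ terms (including the Dirichlet-kernel interference term) and the merging of the $B_k$ limit into the $i=k$ summand are all in line with the paper's one-sentence proof.
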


\begin{proof}
Starting from Corollary \ref{c2}  and ignoring the terms that do not scale with $N$,  we can obtain the desired result.
\end{proof}

Corollary \ref{c4} shows  that as the number of antennas becomes larger, the achievable rate becomes independent of the antenna number and gradually converges to a limit, which is  mainly determined by the number of reflecting elements, and Rician K-factor of IRS-user channels.
Specifically, the achievable rate grows with the number of reflecting elements, and increases as the Rician K-factor of IRS-user channels becomes larger.
Moreover, the interference caused by NLOS paths between the BS and IRSs vanishes, indicating that increasing  the antenna number can compensate for the adverse effect of NLOS paths between the BS and IRSs.

\subsubsection{No NLOS paths}
\begin{corollary} \label{c5}
Without NLOS paths, i.e., $v_{\text{I2U},mk} \to \infty$ and $v_{\text{B2I},m} \to \infty$, the achievable rate is given by
\begin{align}
 R_k=\log_2 \left( 1+\frac{M^2 N \eta_k  \beta_{\text{B2I2U},kk} }
 {N \sum\limits_{i\ne k}^{K}
 \eta_i  \beta_{\text{B2I2U},ik}
 \left| \frac{\sin \frac{M \pi (\hat\vartheta_{\text{I2U},ik} -\hat\vartheta_{\text{I2U},ii})}{2} }{\sin \frac{ \pi (\hat\vartheta_{\text{I2U},ik} -\hat\vartheta_{\text{I2U},ii})}{2} } \right|^2 +\sigma_0^2} \right).
 \end{align}
\end{corollary}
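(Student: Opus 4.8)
The plan is to obtain Corollary \ref{c5} as a direct specialization of Corollary \ref{c2}, since the ``no NLOS paths'' statement sits in the branch where the location information is already perfect ($\Upsilon\to 0$); the only new ingredient is to let the two Rician K-factors diverge. First I would recall that $\beta_{\text{B2I2U},mk} = \frac{\alpha_{\text{B2I},m} v_{\text{B2I},m} \alpha_{\text{I2U},mk} v_{\text{I2U},mk}}{(v_{\text{B2I},m}+1)(v_{\text{I2U},mk}+1)}$, and note that as $v_{\text{B2I},m}\to\infty$ and $v_{\text{I2U},mk}\to\infty$ one has $\frac{v_{\text{B2I},m}}{v_{\text{B2I},m}+1}\to 1$ and $\frac{v_{\text{I2U},mk}}{v_{\text{I2U},mk}+1}\to 1$, so $\beta_{\text{B2I2U},mk}\to \alpha_{\text{B2I},m}\alpha_{\text{I2U},mk}$, a finite constant bounded away from zero. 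Consequently, every term of Corollary \ref{c2} carrying a bare factor $\beta_{\text{B2I2U},mk}$ survives the limit, whereas any term of the form $\beta_{\text{B2I2U},mk}/v_{\text{I2U},mk}$, $\beta_{\text{B2I2U},mk}/v_{\text{B2I},m}$, or $\beta_{\text{B2I2U},mk}/(v_{\text{B2I},m}v_{\text{I2U},mk})$ is $O(1/v)$ and hence tends to $0$.

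Next I would apply this observation termwise to the three quantities in Corollary \ref{c2}. Both summands of $B_k$ are of the $1/v$-decaying type, so $B_k\to 0$. In $C_{k,i}$ the first two summands decay, leaving only $N\eta_i\rho_d \beta_{\text{B2I2U},ik}\bigl|\frac{\sin(M\pi(\hat\vartheta_{\text{I2U},ik}-\hat\vartheta_{\text{I2U},ii})/2)}{\sin(\pi(\hat\vartheta_{\text{I2U},ik}-\hat\vartheta_{\text{I2U},ii})/2)}\bigr|^2$. The desired-signal power $A_k=NM^2\eta_k\rho_d\beta_{\text{B2I2U},kk}$ is unchanged. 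Substituting $B_k=0$ together with the reduced $C_{k,i}$ into $R_k=\log_2\bigl(1+\frac{A_k}{B_k+\sum_{i\ne k}^{K}C_{k,i}+\sigma_0^2}\bigr)$ then yields the stated expression. Note that no sum over $m$ remains because those terms were precisely the ones carrying $1/v_{\text{B2I},m}$ or $1/v_{\text{I2U},mk}$.

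There is no genuine analytical obstacle here; the work is entirely bookkeeping, and the two points that deserve a line of justification are (i) confirming that the retained numerator and interference contributions are of constant order in $v$ while the discarded ones are $O(1/v)$ for every fixed $N$, $M$, $K$, so the limit of the ratio is obtained by discarding them, and (ii) observing that the $\zeta$-coefficients were already collapsed to $1$ in passing to Corollary \ref{c2}, so sending $v\to\infty$ afterwards introduces no new cross-terms and the two limiting operations do not interfere.
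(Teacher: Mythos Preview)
Your proposal is correct and follows exactly the paper's own approach: the paper's proof is the single line ``Starting from Corollary \ref{c2} and ignoring the terms that go to zero as $v_{\text{I2U},mk}\to\infty$ and $v_{\text{B2I},m}\to\infty$, the desired result can be obtained,'' and you have simply spelled out which terms those are and why they vanish. Your added remarks on the finiteness of $\beta_{\text{B2I2U},mk}$ in the limit and on the non-interaction of the $\Upsilon\to 0$ and $v\to\infty$ limits are more detail than the paper provides, but the route is the same.
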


\begin{proof}
Starting from Corollary \ref{c2}  and ignoring the terms that go to zero as $v_{\text{I2U},mk} \to \infty$ and $v_{\text{B2I},m} \to \infty$,  the desired result can be obtained.
\end{proof}

Corollary \ref{c5} shows that without NLOS paths, the SINR is mainly determined by the number of reflecting elements and the antenna number. In particular, the SINR becomes nearly proportional to $M^2$.

\subsubsection{The impact of user directions relative to a IRS}
\begin{proposition} \label{User_Direction}
The $i$-th ($i \ne k$) IRS would cause severe  interference (nearly proportional to $M^2$) to the $k$-th user, if the directions (relative to the $i$-th IRS ) of  the $k$-th user and the $i$-th user are similar, i.e., $\hat\vartheta_{\text{I2U},ik} -\hat\vartheta_{\text{I2U},ii} \to 0$.
\end{proposition}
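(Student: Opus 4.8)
The plan is to read the claim directly off the closed-form interference expression and isolate the single summand that grows quadratically in $M$. Starting from Corollary \ref{c2} (the same term also appears in Corollaries \ref{c4} and \ref{c5}, and in Theorem \ref{t2} after specialization), the interference caused by the $i$-th IRS to the $k$-th user is
\begin{align}
C_{k,i} = N M \eta_i \rho_d \frac{\beta_{\text{B2I2U},ik}}{v_{\text{I2U},ik}} + M \eta_i \rho_d \sum_{m=1}^{K}\left( \frac{\beta_{\text{B2I2U},mk}}{v_{\text{B2I},m} v_{\text{I2U},mk}} + \frac{\beta_{\text{B2I2U},mk}}{v_{\text{B2I},m}} \right) + N \eta_i \rho_d \beta_{\text{B2I2U},ik}\left| \frac{\sin \frac{M \pi \delta_{ik}}{2}}{\sin \frac{\pi \delta_{ik}}{2}} \right|^2 ,
\end{align}
where I abbreviate $\delta_{ik} \triangleq \hat\vartheta_{\text{I2U},ik} - \hat\vartheta_{\text{I2U},ii}$. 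The squared Dirichlet kernel in the last term is the residue of the inner double sum over the $M$ reflecting elements once the angle-error coefficients $\zeta$ collapse to unity, and it is the only place in $C_{k,i}$ where $M$ can enter with exponent two.

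Next I would evaluate that last term in the regime $\delta_{ik} \to 0$. Using the elementary limit $\lim_{x \to 0} \frac{\sin(Mx)}{\sin x} = M$ (one application of L'Hôpital's rule, or the Taylor expansions $\sin(Mx) = Mx + O(x^3)$ and $\sin x = x + O(x^3)$ already invoked in the proof of Corollary \ref{c2}), I obtain
\begin{align}
\lim_{\delta_{ik} \to 0}\left| \frac{\sin \frac{M \pi \delta_{ik}}{2}}{\sin \frac{\pi \delta_{ik}}{2}} \right|^2 = M^2 ,
\end{align}
so this component of $C_{k,i}$ tends to $N M^2 \eta_i \rho_d \beta_{\text{B2I2U},ik}$.

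Finally I would compare scaling orders: the first summand of $C_{k,i}$ is $O(NM)$, the second is $O(M)$, while the third now contributes a term of order $N M^2$, which dominates as $M$ grows. Hence for near-aligned user directions relative to the $i$-th IRS the interference $C_{k,i}$ is nearly proportional to $M^2$ (carrying the usual $N$-fold transmit-array factor), which is exactly the assertion. I expect no genuine obstacle here; the only point that deserves a sentence of justification is that the squared Dirichlet kernel is the \emph{unique} $M^2$-order contribution in $C_{k,i}$, and this is immediate from the explicit structure of $C_{k,i}$ in Corollary \ref{c2}, where every other term is at most linear in $M$.
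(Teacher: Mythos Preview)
Your proposal is correct and follows essentially the same approach as the paper: the paper's proof simply notes that as $\hat\vartheta_{\text{I2U},ik}-\hat\vartheta_{\text{I2U},ii}\to 0$ the Dirichlet-kernel term in $C_{k,i}$ from Corollary~\ref{c2} becomes $N M^2 \eta_i \rho_d \beta_{\text{B2I2U},ik}$, which scales with $M^2$. Your version is in fact more detailed, since you explicitly justify the limit via Taylor expansion and verify that the remaining terms are at most $O(NM)$.
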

\begin{proof}
  As $\hat\vartheta_{\text{I2U},ik} -\hat\vartheta_{\text{I2U},ii} \to 0$, the second term in $C_{k,i}$ can be approximated as
  $N M^2 \eta_i \rho_d \beta_{\text{B2I2U},ik}$, which increases proportionally with $M^2$.
\end{proof}

Proposition \ref{User_Direction} indicates that a user would suffer more interference from a IRS, if this user is in the similar direction (relative to the IRS) to the user assisted by this IRS.


\section{Power control}\label{PC}
 To improve the energy efficiency of the system, and facilitates interference coordination, we propose a low-complexity power control algorithm to minimize the transmit power, subject to individual user rate requirements.

Specifically, the optimization problem can be formulated as
\begin{align} {\label{E13}}
  \begin{array}{ll}
    \min\limits_{\left\{ {\bf p} \right\}}
& \rho_d,  \\
 \text{s.t.} & \begin{array}[t]{lll}
              \sum\limits_{k=1}^{K} p_k=\rho_d,\\
             R_k \ge \bar R_k,k=1,...,K,\\
             p_k \ge 0, k=1,...,K,
           \end{array}
  \end{array}
\end{align}
where ${\bf p}\triangleq [p_1,..., p_k...,p_K]^T$ with $p_k\triangleq \eta_k \rho_d$, and $\bar R_k$ is the required minimum rate of the $k$-th user.

Substituting the equality constraint into the objective function, we rewrite the above optimization problem as
\begin{align} {\label{E13}}
  \begin{array}{ll}
    \min\limits_{\left\{ {\bf p} \right\}}
&  \sum\limits_{k=1}^{K} p_k,  \\
 \text{s.t.} & \begin{array}[t]{lll}
             R_k \ge \bar R_k,k=1,...,K,\\
             p_k \ge 0, k=1,...,K.
           \end{array}
  \end{array}
\end{align}

According to the derived achievable rate in Theorem \ref{t2}, the above optimization problem can be expressed as
\begin{align} {\label{E13}}
  \begin{array}{ll}
    \min\limits_{\left\{ {\bf p} \right\}}
&  \sum\limits_{k=1}^{K} p_k,  \\
 \text{s.t.} & \begin{array}[t]{lll}
             \sum\limits_{i \ne k}^{K}
             \left(2^{\bar R_k}-1 \right) \bar c_{k,i} p_i+\left\{ \left(2^{\bar R_k}-1 \right) \bar b_k-\bar a_k\right\} p_k+\left(2^{\bar R_k}-1 \right)  \sigma_0^2 \le 0,
             k=1,...,K,\\
             p_k \ge 0, k=1,...,K,
           \end{array}
  \end{array}
\end{align}
where
\begin{align}
   & \bar a_k \triangleq \frac{1}{N } \left| \sum\limits_{m= 1}^{K}  \sum\limits_{s=1}^{M} \sqrt{\beta_{\text{B2I2U},mk}}
 {\bf a}^T_{\text{B2I},m} {\bf a}_{\text{B2I},k}^*
   \zeta_{\text{y-I2U},mk,s}
{\left[ {\hat{\bar{\bf g}}}_{\text{I2U},mk} \right]}_s  {\left[{\hat{\bar{\bf g}}}_{\text{I2U},mm} \right]}_s^*
   \right|^2,
   \end{align}
   \begin{align}
    &  \bar c_{k,i} \triangleq \frac{1}{N} \sum\limits_{m=1}^{K} \left(M \frac{\beta_{\text{B2I2U},mk} }{v_{\text{I2U},mk}} \left|{\bf a}^T_{\text{B2I},m}{\bf a}_{\text{B2I},i}^* \right|^2+
   MN\frac{\beta_{\text{B2I2U},mk}}{v_{\text{B2I},m}v_{\text{I2U},mk} }
   +MN \frac{\beta_{\text{B2I2U},mk}}{v_{\text{B2I},m}}\right)
   \\
  &+\frac{1}{N}\sum\limits_{m=1}^{K}
  \sum\limits_{s=1}^{M}   \sum\limits_{l=1}^{M} \beta_{\text{B2I2U},mk}
  \left| {\bf a}^T_{\text{B2I},m} {\bf a}_{\text{B2I},i}^* \right|^2
 \zeta_{\text{y-I2U},mk,sl}
{\left[ {\hat{\bar{\bf g}}}_{\text{I2U},mk} \right]}_s  {\left[{\hat{\bar{\bf g}}}_{\text{I2U},mm} \right]}_s^*
{\left[ {\hat{\bar{\bf g}}}_{\text{I2U},mk} \right]}_l^*  {\left[{\hat{\bar{\bf g}}}_{\text{I2U},mm} \right]}_l
\nonumber \\
&+\frac{1}{N}\sum\limits_{m=1}^{K} \sum\limits_{n\ne m}^{K}
   \sum\limits_{s=1}^{M}   \sum\limits_{l=1}^{M}
   {\bf a}^T_{\text{B2I},m} {\bf a}_{\text{B2I},i}^* \sqrt{\beta_{\text{B2I2U},mk}}  {\bf a}^H_{\text{B2I},n}  {\bf a}_{\text{B2I},i} \sqrt{\beta_{\text{B2I2U},nk}}
\nonumber \\
&\times
\zeta_{\text{y-I2U},mk,nk,sl}
{\left[ {\hat{\bar{\bf g}}}_{\text{I2U},mk} \right]}_s  {\left[{\hat{\bar{\bf g}}}_{\text{I2U},mm} \right]}_s^*
{\left[ {\hat{\bar{\bf g}}}_{\text{I2U},nk} \right]}_l^*  {\left[{\hat{\bar{\bf g}}}_{\text{I2U},nn} \right]}_l \nonumber,  \\
 &\bar b_k \triangleq \bar c_{k,k}- \bar a_k.
\end{align}

Furthermore, we can rewrite the optimization problem in a concise vector form:
\begin{align} {\label{E13}}
  \begin{array}{ll}
    \min\limits_{\left\{ {\bf p} \right\}}
& {\bf 1}_{K}^T {\bf p},  \\
 \text{s.t.} & \begin{array}[t]{lll}
            \bar{\bf d}_k^T {\bf p} +
            \left(2^{\bar R_k}-1 \right)  \sigma_0^2 \le 0, k=1,...,K,\\
             {\bf p} \ge {\bf 0},
           \end{array}
  \end{array}
\end{align}
where we define  ${\bf 1}_{K} \triangleq [1,1,...,1]^T \in {\mathbb{C}}^{K \times 1}$ and
$\bar {\bf d}_k \triangleq [\bar d_{k,1},...,\bar d_{k,i},...\bar d_{k,K}]^T$ with $\bar d_{k,i} \triangleq  \left(2^{\bar R_k}-1 \right) \bar c_{k,i}$ for $i \ne k$ and $\bar d_{k,k} \triangleq  \left(2^{\bar R_k}-1 \right) \bar b_k-\bar a_k $.

As such, the original optimization problem becomes a standard linear programming problem, which can be solved by using off-the-shelf optimization tools such as CVX.

It worth noting that the power control problem would become infeasible, if the desired individual rate exceeds a certain threshold. Moreover, as the user location uncertainty (measured by $\Upsilon$) increases, the threshold becomes smaller, due to increased interference.

\begin{remark}
The key advantage of the proposed power control algorithm lies in the fact that it only requires statistical CSI (location information and large fading coefficients), thereby significantly reducing the overhead for CSI acquisition, which is substantially different from conventional power control algorithms requiring instantaneous CSI.\end{remark}

\section{Simulation Results} \label{s5}
In this section, we  provide simulation results to
illustrate the performance
of the location information aided multi-IRS system, as well as to verify  performance of the proposed power control scheme.
The considered system is assumed to operate with the bandwidth of 180 kHz and noise spectral power density of -169 dBm/Hz.
For the BS-IRS channel, the large-scale fading coefficient is modeled by $\alpha_{\text{B2I},m}= C_0 (\frac{d_{\text{B2I},m}}{D_0})^{-\kappa_\text{B2I}}$, where $d_{\text{B2I},m}$ is the distance between the BS and the $m$-th IRS, $C_0$ is the path loss at the reference distance $D_0=1$ meter, and $\kappa_\text{B2I}$ denotes the path loss exponent.
Similarly, for the IRS-user channel, the large-scale fading coefficient is characterized by $\alpha_{\text{I2U},mk}= C_0 (\frac{d_{\text{I2U},mk}}{D_0})^{-\kappa_{\text{I2U}}}$.
Unless otherwise specified, the setup given by Table I is used.
All numerical results are obtained by averaging over 1000 independent small-scale fading parameters for each realization of  user location errors.

\begin{table}
\centering
\label{table2}
\begin{tabular} {|c|c|}
 \hline
 {\bf Parameter} & {\bf Value} \\
 \hline
 Path loss  & $C_0=-30 \text{dB}$ \\
 \hline
Transmit power  & $\rho_d=30 \text{dBm}$ \\
 \hline
Path loss exponent  & $\kappa_\text{B2I}=\kappa_\text{I2U}=2.5$\\
 \hline
Rician K-factor &  $v_\text{I2U}=v_\text{B2I}=v=5$ \\
 \hline
Number of BS antennas & $N=5$\\
 \hline
 Number of   users   & $ K=4$\\
 \hline
 Reflecting element number   & $ M=16$\\
 \hline
BS location &  $(0, 0,0)$
 \\
 \hline
 IRS locations &  $(240, 178, -20), (333, 68, -20),  (362,-75,-20),  (319, -241, -20)$
 \\
 \hline
 User locations &   $(224, 168, -40 ), (314, 64, -40), (343, -71, -40), (303,  -229, -40 )$
 \\
 \hline
\end{tabular}
\caption{Parameter table}
\end{table}

Fig. \ref{Rate_verify} illustrates the achievable sum rate of the system with different user location errors and different Rician K-factors,
where the analytical curves are generated according to Theorem \ref{t2}. As can be readily observed, the numerical results match exactly with the analytical results, thereby validating the correctness of the analytical expressions. Moreover, the sum rate saturates at the high SNRs due to the joint effect of inter-user interference and leakage power. Furthermore, as the user location uncertainty becomes larger, the sum rate decreases significantly.
For instance, the sum rate  with $\rho_d=40 \text{dBm},\Upsilon=0.5$ and $v_\text{B2I}=v_\text{I2U}=5$ is about $16$ bits/s/Hz, but reduces to $5.5$ bits/s/Hz as $\Upsilon$ increases to $2$. In addition, the achievable sum rate increases with the Rician K-factor, due to less interference caused by NLOS paths.
\begin{figure}[!ht]
  \centering
  \includegraphics[width=3.5in]{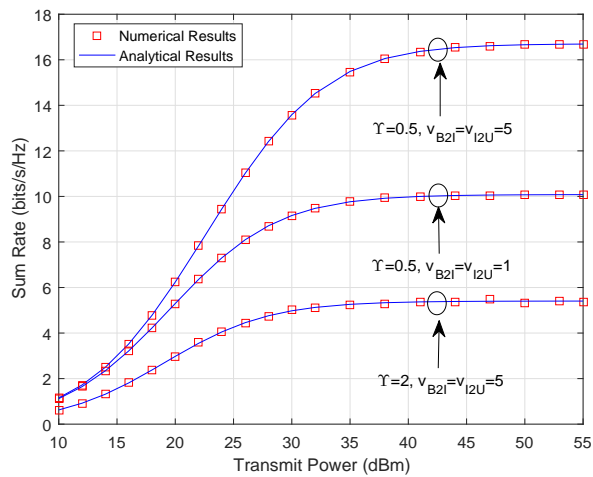}
   \caption{Sum rate of the location information-aided multi-IRS system.}
  \label{Rate_verify}
\end{figure}

Fig. \ref{IRS_direction} shows the impact of IRS directions on the sum rate.
For the ``non-orthogonal IRS directions'' case, the locations of 4 IRSs are given by
 $(278, 113, -20), (338, 41, -20 ), (367, -45, -20)$ and $(370, -151, -20)$, respectively.
For the ``orthogonal IRS directions'' case, the locations of 4 IRSs are given by
 $(240, 178, -20), (333, 68, -20), (362, -75, -20)$ and $(319, -241, -20)$, respectively.
The analytical curve with orthogonal IRS directions (  ${\bf a}_{\text{B2I},m}^T {\bf a}_{\text{B2I},i}^* = 0, m \ne i$ ) is plotted according to Corollary \ref{c2}, while the analytical curve with non-orthogonal IRS directions is obtained according to Theorem \ref{t2}.
We can see that the analytical results with orthogonal IRS directions well match their numerical results, which verifies our analytical results in Corollary \ref{c2}.
Moreover, as expected in Proposition \ref{p2} , the achievable sum rate with orthogonal IRS directions is much higher than that with non-orthogonal IRS directions, due to the decreased  interference from multiple IRSs.

\begin{figure}[!ht]
  \centering
  \includegraphics[width=3.5in]{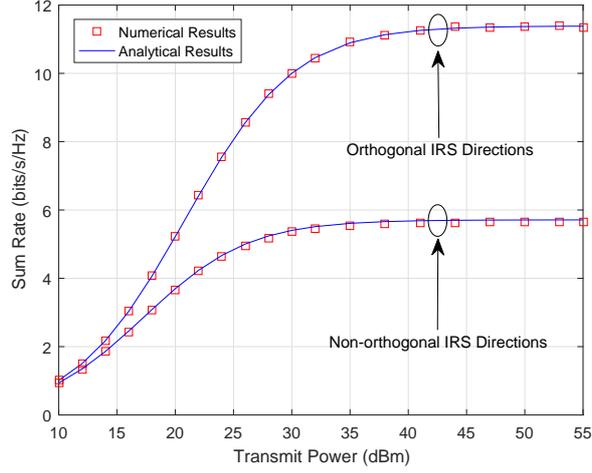}
   \caption{The impact of IRS directions on the sum rate with $\Upsilon=1$m.}
  \label{IRS_direction}
\end{figure}



Fig. \ref{IRS_location} shows the impact of the IRS location on the sum rate, when the locations of users are fixed. As can be readily observed, the sum rate is not a monotonic function with respect to the BS-IRS distances. The worst sum rate occurs when the IRSs are deployed in the middle of the BS and users, and the sum rate improves when moving the IRSs towards the BS or users, which indicates that it is desirable to deploy the IRSs near the BS or users.
\begin{figure}[!ht]
  \centering
  \includegraphics[width=3.5in]{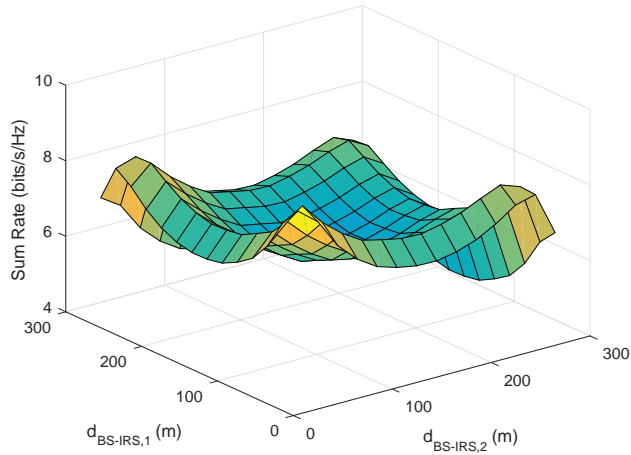}
   \caption{ The impact of the IRS location on the sum rate.}
  \label{IRS_location}
\end{figure}

Fig. \ref{Rate_VS_Upsilon} shows the  impact of the number of reflecting elements on the sum rate, where the  approximate results are obtained according to Corollary \ref{c3}. We can see that the approximate results matches the numerical results well, thereby validating the correctness of Corollary \ref{c3}.
Moreover, as expected in  Corollary \ref{c3}, the sum rate grows logarithmically with the number of reflecting elements, indicating the great benefit of applying a large number of reflecting elements.
In addition, increasing the antenna number can also improve the sum rate due to the increased BS beam gain.

\begin{figure}[!ht]
  \centering
  \includegraphics[width=3.5in]{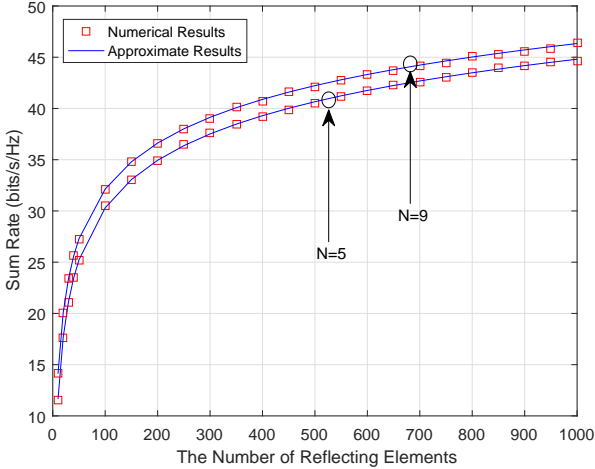}
  \caption{ The impact of the number of reflecting elements on the sum rate.}
  \label{Rate_VS_Upsilon}
\end{figure}

Fig. \ref{Antenna_number} shows the  impact of the antenna number on the sum rate, where the ``Limit'' curve is  generated according to Corollary \ref{c4}.
As expected, the sum rate increases with the number of antennas and gradually converges to a limit, which is mainly determined by the number of reflecting elements and the Rician K-factor of IRS-user channels.
It can be seen that the sum rate limit increases with the number reflecting elements, while decreases
as the the Rician K-factor of IRS-user channels becomes smaller due to more interference caused by NLOS paths.

\begin{figure}[!ht]
  \centering
  \includegraphics[width=3.5in]{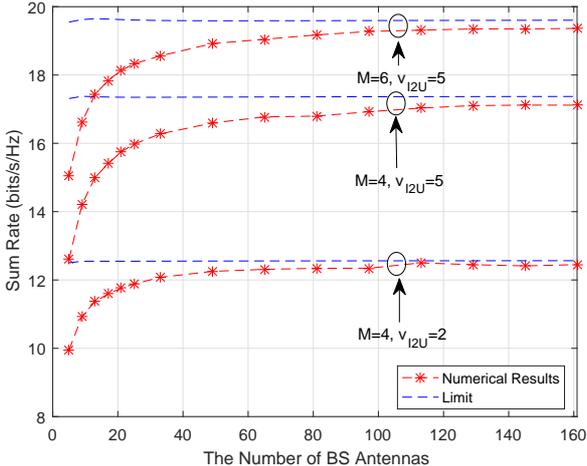}
  \caption{The impact of the number of BS antennas on the sum rate with perfect location information and $\rho_d= 40 \text{dBm}$.}
  \label{Antenna_number}
\end{figure}

Fig. \ref{NLOS_paths} illustrates the  impact of the Rician K-factor on the sum rate, where the ``Limit'' curve is  generated according to Corollary \ref{c5}.
It can be seen that the sum rate grows with the Rician K-factor and gradually converges to a limit, which is mainly determined by the number of reflecting elements.
As expected in Corollary \ref{c5},
increasing the number of reflecting elements greatly improves the sum rate, due to both the phase shift beamforming gain and the inherent aperture gain of the IRS.
For instance, with a large Rician K-factor, the sum rate increases significantly from about 5 bits/s/Hz to about 11 bits/s/Hz, as the number of reflecting elements grows from 4 to 8.

\begin{figure}[!ht]
  \centering
  \includegraphics[width=3.5in]{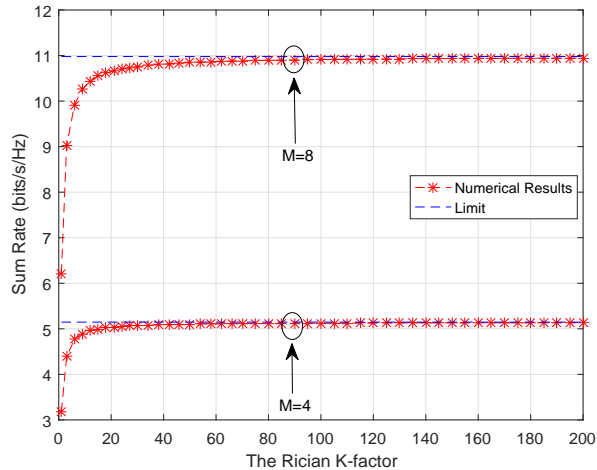}
  \caption{The impact of the Rician K-factor on the sum rate with perfect location information.}
  \label{NLOS_paths}
\end{figure}



 Fig. \ref{beamforming_comparison} shows the performance of  the proposed beamforming scheme, where the optimal power control  scheme discussed in Section \ref{PC} is performed, and we assume $\bar R_k=\bar R, k=1,...,K$ with $\bar R$ being the desired individual rate.
For comparison, the joint optimization algorithm in \cite{wu2019intelligent} is presented as the benchmark, where the multiple IRSs are treated as a big distributed IRS.
It can be observed that, when the desired rate is not very high, the
proposed scheme is superior to the benchmark scheme. This is reasonable because with small rate
constraint, the required transmit power is not high, as such, the system is likely to be noise-limited. In a noise-limited scenario, the MRC-based scheme is near optimal. Then, combining with the optimal power control algorithm eventually yields superior performance. However, as the rate constraint becomes
stringent, the required transmit power grows and the system becomes interference-limited. Then,
the performance of the proposed MRC scheme deteriorates significantly due to severe interference,
and becomes inferior to the benchmark scheme. In addition, it may become infeasible if the
desired rate is greater than a threshold. The reason is that the proposed beamforming scheme
is designed without considering interference suppression, which is only partially addressed by
power control and proper IRS configurations, i.e., deploying multiple IRSs at distinct directions,
increasing the distance between different IRSs, and assigning each user to its nearest IRS. In
contrast, the benchmark scheme assumes perfect CSI and adopts a joint optimization method,
where the transmit beam and the phase shift beam are optimized by solving two semidefinite
programming (SDP) sub-problems iteratively. Hence, the benchmark scheme can better deal with
the inter-user interference.

\begin{figure}[!ht]
  \centering
  \includegraphics[width=3.5in]{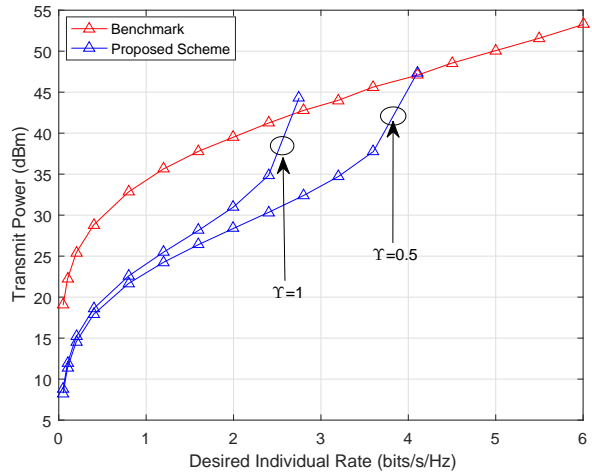}
  \caption{ The performance of the proposed beamforming scheme .}
  \label{beamforming_comparison}
\end{figure}

\section{Conclusion} \label{s6}
This paper has proposed a location information aided  multi-IRS system. Based on the imperfect GPS location information, the effective angles are estimated, and the statistics of the angle estimation error are derived. Then, exploiting the estimated angles, low-complexity transmit beam and IRS phase shift beam have been designed, and closed-form expressions are derived for the achievable rate, which facilitates the characterization of the benefits of deploying a large number of BS antennas or reflecting elements. The findings suggest that the IRS-aided link can obtain a power gain of order $N M^2$. With a large number of reflecting elements, the individual SINR is proportional to $M$, while becomes proportional to $M^2$ when there are no NLOS paths. Besides, it is demonstrated that the achievable rate degrades significantly as user location uncertainty increases. Furthermore, the analytical findings suggest that IRSs should be deployed at distinct directions and be far away from each other to reduce the interference from multiple IRSs. Finally,  based on the proposed beamforming scheme, an optimal power control scheme is proposed to minimize the total transmit power under individual rate constraints.
For future works, the robust beamforming in the presence of location  uncertainty can be investigated, to guarantee the worse-case quality of service requirement \cite{J.Zhang,X.Hu1}. Besides, when the number of users is greater than the number of IRSs, how to schedule users in different resource blocks and how to pair the user with the IRS are worth studying.

\begin{appendices}
\section{Proof of Proposition \ref{p1}} \label{Ax1}
The effective AOD from the $m$-th IRS  to the $k$-th  user is given by
\begin{align}
 \vartheta_{\text{y-I2U},mk} =\frac{y_{\text{I},m}- y_{\text{U},k}}{ d_{\text{I2U},mk}}.
\end{align}

Using (\ref{E1}), the above equation can be rewritten as
\begin{align} \label{E2}
   \vartheta_{\text{y-I2U},mk} =\frac{\hat d_{\text{I2U},mk}}{d_{\text{I2U},mk}}  \hat  \vartheta_{\text{y-I2U},mk}-\frac{
   \Delta y_{\text{U},k}}{ d_{\text{I2U},mk}},
\end{align}
where $d_{\text{I2U},mk}=\sqrt{(z_{\text{I},m}- z_{\text{U},k})^2+(y_{\text{I},m}- y_{\text{U},k})^2+(x_{\text{I},m}- x_{\text{U},k})^2}$ is the  distance between the BS and the $k$-th user, and $\Delta y_{\text{U},k}=y_{\text{U},k}-\hat y_{\text{U},k}$ is the location error along the $y$ axis.

Next, we focus on the term $\frac{\hat d_{\text{I2U},mk}}{d_{\text{I2U},mk}}$:
\begin{align}
  \frac{\hat d_{\text{I2U},mk}}{d_{\text{I2U},mk}}=  {\left( 1+Q\right)}^{-\frac{1}{2} },
\end{align}
where
\begin{align}
 Q & \triangleq \frac{{\Delta z_{\text{U},k}}^2 +{\Delta y_{\text{U},k}}^2 +{\Delta x_{\text{U},k}}^2 }{{\hat d}^2_{\text{I2U},mk}}  -2\left( \frac{  \hat\vartheta_{\text{z-I2U},mk} \Delta z_{\text{U},k} + \hat\vartheta_{\text{y-I2U},mk} \Delta y_{\text{U},k}+\hat\vartheta_{\text{x-I2U},mk} \Delta x_{\text{U},k} }{\hat d_{\text{I2U},mk}}\right) , \\
 & \approx -2\left( \frac{  \hat\vartheta_{\text{z-I2U},mk} \Delta z_{\text{U},k} + \hat\vartheta_{\text{y-I2U},mk} \Delta y_{\text{U},k}+\hat\vartheta_{\text{x-I2U},mk} \Delta x_{\text{U},k} }{\hat d_{\text{I2U},mk}}\right),\nonumber
\end{align}
where $\hat \vartheta_{\text{x-I2U},mk}  \triangleq \frac{x_{\text{I},m}-\hat x_{\text{U},k}}{\hat d_{\text{I2U},mk}}$, $\hat \vartheta_{\text{z-I2U},mk}  \triangleq \frac{z_{\text{I},m}-\hat z_{\text{U},k}}{\hat d_{\text{I2U},mk}}$, $\Delta z_{\text{U},k}=z_{\text{U},k}-\hat z_{\text{U},k}$ and $\Delta x_{\text{U},k}=x_{\text{U},k}-\hat x_{\text{U},k}$  are location errors along $z$ and $x$ axes, respectively.

Using the Taylor expansion of  ${\left( 1+Q\right)}^{-\frac{1}{2} }$ at $Q=0$, we have
\begin{align} \label{E3}
  \frac{\hat d_{\text{I2U},mk}}{d_{\text{I2U},mk}}=  1-\frac{1}{2}Q.
\end{align}

Substituting (\ref{E3}) into (\ref{E2}), we obtain
\begin{align}
   \vartheta_{\text{y-I2U},mk} =
   \left(1-\frac{1}{2}Q \right) \hat  \vartheta_{\text{y-I2U},mk}
   -\left(1-\frac{1}{2}Q \right) \frac{\Delta y_{\text{U},k}}{ \hat d_{\text{I2U},mk}}
   \approx  \hat\vartheta_{\text{y-I2U},mk} +\epsilon_{\text{y-I2U},mk},
\end{align}
where the estimation error of the effective AOD along $y$ axis, $\epsilon_{\text{y-I2U},mk}$, is given by
\begin{align}
 \epsilon_{\text{y-I2U},mk} &=
  - \frac{1}{2}Q\hat\vartheta_{\text{y-I2U},mk}-
  \frac{\Delta y_{\text{U},k}}{ \hat d_{\text{I2U},mk}} ,\\
  &=  \frac{ \left( \hat\vartheta^2_{\text{y-I2U},mk}-1\right) \Delta y_{\text{U},k} + \hat\vartheta_{\text{y-I2U},mk} \hat\vartheta_{\text{z-I2U},mk} \Delta z_{\text{U},k}+\hat\vartheta_{\text{y-I2U},mk}\hat\vartheta_{\text{x-I2U},mk} \Delta x_{\text{U},k} }{\hat d_{\text{I2U},mk}} .\nonumber
\end{align}
To this end, we complete the proof of (\ref{E5}).

\section{Proof of Theorem \ref{t1}} \label{At1}
The equation (\ref{E5}) given  by Proposition \ref{p1} can be rewritten as
 \begin{align}
    \epsilon_{\text{y-I2U},mk} = a_{\text{y-I2U},mk} \Delta y_{\text{U},k}+ b_{\text{y-I2U},mk} \Delta z_{\text{U},k}
    + c_{\text{y-I2U},mk} \Delta x_{\text{U},k}
 \end{align}
 where $a_{\text{y-I2U},mk} \triangleq \frac{ \hat\vartheta^2_{\text{y-I2U},mk}-1}{\hat d_{\text{I2U},mk}}$, $b_{\text{y-I2U},mk} \triangleq \frac{\hat\vartheta_{\text{y-I2U},mk} \hat\vartheta_{\text{z-I2U},mk}}{\hat d_{\text{I2U},mk}}$, and $c_{\text{y-I2U},mk} \triangleq \frac{\hat\vartheta_{\text{y-I2U},mk}\hat\vartheta_{\text{x-I2U},mk} }{\hat d_{\text{I2U},mk}}$.

 The cumulative distribution function (CDF) of $\epsilon_{\text{y-I2U},mk}$ is given by
 \begin{align}
     F_{\epsilon_{\text{y-I2U},mk}} \left( x\right)&= P \left(\epsilon_{\text{y-I2U},mk} < x \right)\\
     &=P \left(a_{\text{y-I2U},mk} \Delta y_{\text{U},k}+ b_{\text{y-I2U},mk} \Delta z_{\text{U},k}
    + c_{\text{y-I2U},mk} \Delta x_{\text{U},k} - x<0 \right).\nonumber
 \end{align}

 Recall that the point $(\Delta x_{\text{U},k}, \Delta y_{\text{U},k}, \Delta z_{\text{U},k})$ is uniformly distributed within a sphere with the radius $\Upsilon$ and the origin $(0,0,0)$.  The distance from the origin $(0,0,0)$ to the plane $a_{\text{y-I2U},mk} \Delta y_{\text{U},k}+ b_{\text{y-I2U},mk} \Delta z_{\text{U},k} + c_{\text{y-I2U},mk} \Delta x_{\text{U},k} - x=0$  is
 \begin{align} \label{E6}
     d=\frac{\left|x\right|}{\sqrt{ a^2_{\text{y-I2U},mk} +b^2_{\text{y-I2U},mk}+c^2_{\text{y-I2U},mk} }}.
 \end{align}

 For  $-\Upsilon \sqrt{ a^2_{\text{y-I2U},mk} +b^2_{\text{y-I2U},mk}+c^2_{\text{y-I2U},mk} }\le x \le 0$, we have
 \begin{align} \label{E7}
  & F_{\epsilon_{\text{y-I2U},mk}} \left( x\right) =1/\left( \frac{4 \pi \Upsilon^3}{3}  \right)\int_{d}^\Upsilon \pi \left(\Upsilon^2 -t^2\right)dt \\
    &=1/\left( \frac{4 \pi \Upsilon^3}{3}  \right) \left(\frac{\pi}{3} d^3-\pi \Upsilon^2 d+\frac{2 \pi}{3} \Upsilon^3\right) .\nonumber
 \end{align}

  For  $0<x\le \Upsilon \sqrt{ a^2_{\text{y-I2U},mk} +b^2_{\text{y-I2U},mk}+c^2_{\text{y-I2U},mk} }$, we have
 \begin{align}\label{E8}
  & F_{\epsilon_{\text{y-I2U},mk}} \left( x\right) =1-1/\left( \frac{4 \pi \Upsilon^3}{3}  \right)\int_{d}^\Upsilon \pi \left(\Upsilon^2 -t^2\right)dt \\
    &=1-1/\left( \frac{4 \pi \Upsilon^3}{3}  \right) \left(\frac{\pi}{3} d^3-\pi \Upsilon^2 d+\frac{2 \pi}{3} \Upsilon^3\right). \nonumber
 \end{align}

 Substituting (\ref{E6}) into (\ref{E7}) and (\ref{E8}), we obtain
 \begin{align}
 F_{\epsilon_{\text{y-I2U},mk}} \left( x\right)  &=\frac{1}{2} - \frac{1}{4 \Upsilon^3} {\left( a^2_{\text{y-I2U},mk} +b^2_{\text{y-I2U},mk}+c^2_{\text{y-I2U},mk}  \right)}^{-3/2} x^3 \\
 & +\frac{3}{4 \Upsilon} {\left( a^2_{\text{y-I2U},mk} +b^2_{\text{y-I2U},mk}+c^2_{\text{y-I2U},mk}  \right)}^{-1/2} x,
  \nonumber
 \end{align}
 for $ \left|x \right|\le \Upsilon \sqrt{ a^2_{\text{y-I2U},mk} +b^2_{\text{y-I2U},mk}+c^2_{\text{y-I2U},mk} }$.

 Define
$
 \Phi_{\text{y-I2U},mk} \triangleq \sqrt {{\left( \hat\vartheta^2_{\text{y-I2U},mk}-1 \right)}^2+\hat\vartheta^2_{\text{y-I2U},mk} \hat\vartheta^2_{\text{z-I2U},mk}+\hat\vartheta^2_{\text{y-I2U},mk} \hat\vartheta^2_{\text{x-I2U},mk}}
$. The corresponding PDF is given by
 \begin{align}
 f_{\epsilon_{\text{y-I2U},mk}} \left( x\right) = \frac{\partial F_{\epsilon_{\text{y-I2U},mk}} \left( x\right) }{\partial x}
 = - \frac{3 {\hat{d}}^3_{\text{I2U},mk} } {4 \Upsilon^3} {\Phi}^{-3}_{\text{y-I2U},mk} x^2  +\frac{3 {\hat{d}}_{\text{I2U},mk}}{4 \Upsilon} {\Phi}^{-1}_{\text{y-I2U},mk},
  \nonumber
 \end{align}
 for $|x|\le \frac{\Upsilon}{\hat{d}_{\text{I2U},mk}} {\Phi}_{\text{y-I2U},mk} $.

 Based on the above PDF, the mean and the variance of $\epsilon_{\text{y-I2U},mk}$ can be obtained.

\section{Proof of Lemma \ref{L2}} \label{AL2}
Denote $\epsilon_{k,mn,sl} \triangleq \left(s-1\right)\epsilon_{\text{y-I2U},mk}
 -\left(l-1\right)\epsilon_{\text{y-I2U},nk}$.
 \begin{align}
 \zeta_{\text{y-I2U},k,mn,sl}
 \triangleq
  \mathbb{E}\left\{
   \varrho_{\text{I2U},mk,s}
  \varrho_{\text{I2U},nk,l}^* \right\}
= \mathbb{E}\left\{
 e^{j\pi \epsilon_{k,mn,sl}}
 \right\}.
 \end{align}

 We can rewrite $\epsilon_{k,mn,sl}$ as
 \begin{align}
     \epsilon_{k,mn,sl}= a_{k,mn,sl} \Delta y_{\text{U},k}+ b_{k,mn,sl} \Delta z_{\text{U},k}
    + c_{k,mn,sl} \Delta x_{\text{U},k},
 \end{align}
 where
 \begin{align}
  & a_{k,mn,sl} \triangleq
   \left(s-1\right) \frac{ \hat\vartheta^2_{\text{y-I2U},mk}-1}{\hat d_{\text{I2U},mk}} - \left(l-1\right) \frac{ \hat\vartheta^2_{\text{y-I2U},nk}-1}{\hat d_{\text{I2U},nk}}, \\
  & b_{k,mn,sl} \triangleq
    \left(s-1\right)\frac{\hat\vartheta_{\text{y-I2U},mk} \hat\vartheta_{\text{z-I2U},mk}}{\hat d_{\text{I2U},mk}}- \left(l-1\right)\frac{\hat\vartheta_{\text{y-I2U},nk} \hat\vartheta_{\text{z-I2U},nk}}{\hat d_{\text{I2U},nk}},\\
    &c_{k,mn,sl}\triangleq
    \left(s-1\right)  \frac{\hat\vartheta_{\text{y-I2U},mk}\hat\vartheta_{\text{x-I2U},mk} }{\hat d_{\text{I2U},mk}}- \left(l-1\right)  \frac{\hat\vartheta_{\text{y-I2U},nk}\hat\vartheta_{\text{x-I2U},nk} }{\hat d_{\text{I2U},nk}}.
 \end{align}

 Following the similar process of the proof of Theorem \ref{t1}, we have the PDF of $\epsilon_{k,mn,sl}$ given by
  \begin{align}
  & f_{\epsilon_{k,mn,sl}} \left( x\right) \\
&= -\frac{3}{4 \Upsilon^3} {\left( a^2_{k,mn,sl} +b^2_{k,mn,sl}+c^2_{k,mn,sl}  \right)}^{-3/2} x^2  +\frac{3}{4 \Upsilon} {\left( a^2_{k,mn,sl} +b^2_{k,mn,sl}+c^2_{k,mn,sl}  \right)}^{-1/2}\nonumber\\
&=-\frac{3    }{4 \Upsilon^3 \Phi_{k,mn,sl}^3} x^2
+\frac{3  }{4 \Upsilon \Phi_{k,mn,sl}},\nonumber
 \end{align}
 for $|x|\le \Upsilon {\Phi}_{k,mn,sl} $,
 where we define
 \begin{align}
 \Phi_{k,mn,sl} \triangleq \sqrt { a^2_{k,mn,sl} +b^2_{k,mn,sl}+c^2_{k,mn,sl} }.
 \end{align}

According to the above PDF, the expectation $\zeta_{\text{y-I2U},k,mn,sl}
= \mathbb{E}\left\{
 e^{j\pi \epsilon_{k,mn,sl}}
 \right\}$ can be calculated as
 \begin{align}
& \zeta_{k,mn,sl}
= \mathbb{E}\left\{
 e^{j\pi \epsilon_{k,mn,sl}}
 \right\}\\
 &=\begin{cases}
1, & s=1\ \text{and} \  l=1\\
 \frac{3}{\varpi_{ k,mn,sl}^2} \left(
\frac{\sin \varpi_{ k,mn,sl} }{\varpi_{ k,mn,sl}} -\cos\varpi_{ k,mn,sl} \right), \nonumber & \text{else}
\end{cases}
 \end{align}
where
$
    \varpi_{ k,mn,sl} \triangleq {\pi {\Phi}_{k,mn,sl} \Upsilon}
$.

\section{Proof of Theorem \ref{t2}} \label{At2}
\subsection{ Calculate $A_k$} \label{AA}

Recall that ${\bf w}_{k}=\sqrt{\frac{\eta_k \rho_d}{N  } } {\bf a}_{\text{B2I},k}^*$. We have
\begin{align}
   & \mathbb{E}\left\{  {\bf g}_{k}^T   {\bf w}_k \right\}
    =  \sqrt{\frac{\eta_i \rho_d}{N } } \mathbb{E}\left\{  {\bf g}_{k}^T {\bf a}_{\text{B2I},k}^*   \right\} \\
   &  =  \sqrt{\frac{\eta_k \rho_d}{N } } \mathbb{E}\left\{  {\bf g}_{\text{I2U},kk}^T { \bm \Theta_{k}} {\bf G}_{\text{B2I},k} {\bf a}_{\text{B2I},k}^*  \right\}
   +\sqrt{\frac{\eta_k \rho_d}{N } } \sum\limits_{m\ne k}^{K} \mathbb{E}\left\{  {\bf g}_{\text{I2U},mk}^T { \bm \Theta_{m}} {\bf G}_{\text{B2I},m} {\bf a}_{\text{B2I},k}^*  \right\}. \nonumber
    \end{align}

We first calculate $ \mathbb{E}\left\{  {\bf g}_{\text{I2U},mk}^T { \bm \Theta_{m}} {\bf G}_{\text{B2I},m} {\bf a}_{\text{B2I},k}^*  \right\}, m\ne k$:
\begin{align}
  &\mathbb{E}\left\{  {\bf g}_{\text{I2U},mk}^T { \bm \Theta_{m}} {\bf G}_{\text{B2I},m} {\bf a}_{\text{B2I},k}^*  \right\}=
   \sqrt{\beta_{\text{B2I2U},mk}} \mathbb{E}\left\{  \bar{\bf g}_{\text{I2U},mk}^T { \bm \Theta_{m}}  {\bf b}_{\text{B2I},m}  \right\}
   {\bf a}^T_{\text{B2I},m} {\bf a}_{\text{B2I},k}^*\\
   &=  {\bf a}^T_{\text{B2I},m} {\bf a}_{\text{B2I},k}^* \sqrt{\beta_{\text{B2I2U},mk}} \mathbb{E}\left\{ {\bm \xi}^T_m \left( \bar{\bf g}_{\text{I2U},mk} \odot  {\bf b}_{\text{B2I},m}\right) \right\} \nonumber \\
   &= {\bf a}^T_{\text{B2I},m} {\bf a}_{\text{B2I},k}^* \sqrt{\beta_{\text{B2I2U},mk}}
   \sum\limits_{s=1}^{M}
\mathbb{E}\left\{  \varrho_{\text{I2U},mk,s} \right\}
{\left[ {\hat{\bar{\bf g}}}_{\text{I2U},mk} \right]}_s  {\left[{\hat{\bar{\bf g}}}_{\text{I2U},mm} \right]}_s^*
\nonumber.
\end{align}

Denoting $\zeta_{\text{y-I2U},mk,s}\triangleq  \mathbb{E}\left\{ \varrho_{\text{I2U},mk,s} \right\}$ and using the results given by Theorem \ref{t1}, we have
\begin{align} \label{E10}
 \zeta_{\text{y-I2U},mk,s}
& =\int_{-\frac{\Upsilon {\Phi}_{\text{y-I2U},mk}}{\hat{d}_{\text{I2U},mk}}  }^{\frac{\Upsilon {\Phi}_{\text{y-I2U},mk}}{\hat{d}_{\text{I2U},mk}} } \left( - \frac{3 {\hat{d}}^3_{\text{I2U},mk} } {4 \Upsilon^3} {\Phi}^{-3}_{\text{y-I2U},mk} x^2  +\frac{3 {\hat{d}}_{\text{I2U},mk}}{4 \Upsilon} {\Phi}^{-1}_{\text{y-I2U},mk} \right)
  e^{ j\pi
\left(s-1\right) x } \ dx \\
& =
\begin{cases}
 \frac{3}{\varpi_{ \text{y-I2U},mk,s}^2} \left(
\frac{\sin \varpi_{  \text{y-I2U},mk,s} }{\varpi_{  \text{y-I2U},mk,s}} -\cos\varpi_{  \text{y-I2U},mk,s} \right) & s \!\ne\! 1\\
1& s\! =\!1
\end{cases}.\nonumber
\end{align}

Thus, we obtain
\begin{align} \label{Ak2}
  &\mathbb{E}\left\{  {\bf g}_{\text{I2U},mk}^T { \bm \Theta_{m}} {\bf G}_{\text{B2I},m} {\bf a}_{\text{B2I},k}^*  \right\} \\
&={\bf a}^T_{\text{B2I},m} {\bf a}_{\text{B2I},k}^* \sqrt{\beta_{\text{B2I2U},mk}}
  \sum\limits_{s=1}^{M} \zeta_{\text{y-I2U},mk,s}
{\left[ {\hat{\bar{\bf g}}}_{\text{I2U},mk} \right]}_s  {\left[{\hat{\bar{\bf g}}}_{\text{I2U},mm} \right]}_s^*
\nonumber.
\end{align}

 Then, we start the calculation of the first term:
    \begin{align}
       &\mathbb{E}\left\{  {\bf g}_{\text{I2U},kk}^T { \bm \Theta_{k}} {\bf G}_{\text{B2I},k} {\bf a}_{\text{B2I},k}^*  \right\}
       =N \sqrt{\beta_{\text{B2I2U},kk}} \mathbb{E}\left\{  \bar{\bf g}_{\text{I2U},kk}^T { \bm \Theta_{k}} {\bf b}_{\text{B2I},k} \right\}\\
       &=N \sqrt{\beta_{\text{B2I2U},kk}} \mathbb{E}\left\{ {\bm \xi}^T  \left( \bar{\bf g}_{\text{I2U},kk} \odot  {\bf b}_{\text{B2I},k}\right) \right\},\nonumber
    \end{align}
where $\beta_{\text{B2I2U},mk} \triangleq \frac{\alpha_{\text{B2I},m} v_{\text{B2I},m} \alpha_{\text{I2U},mk} v_{\text{I2U},mk} }{  (v_{\text{B2I},m}+1 ) (v_{\text{I2U},mk}+1 )}$.

Recall that ${\bm \xi}_k=\left({\hat{\bar{\bf g}}}_{\text{I2U},kk} \odot {\bf b}_{\text{B2I},k} \right)^*$. The above equation can be expressed as
\begin{align}
\mathbb{E}\left\{  {\bf g}_{\text{I2U},kk}^T { \bm \Theta_{k}} {\bf G}_{\text{B2I},k} {\bf a}_{\text{B2I},k}^*  \right\}
= N \sqrt{\beta_{\text{B2I2U},kk}} \sum\limits_{s=1}^{M}
\mathbb{E}\left\{  \varrho_{\text{I2U},kk,s} \right\}.
\end{align}

As such, we obtain
\begin{align} \label{Ak1}
\mathbb{E}\left\{  {\bf g}_{\text{I2U},kk}^T { \bm \Theta_{k}} {\bf G}_{\text{B2I},k} {\bf a}_{\text{B2I},k}^*  \right\}
= N \sqrt{\beta_{\text{B2I2U},kk}} \sum\limits_{s=1}^{M} \zeta_{\text{y-I2U},kk,s}.
\end{align}

Combining \ref{Ak2} and \ref{Ak1} yields
\begin{align}
 & A_k=\frac{\eta_k \rho_d}{N } \left| \sum\limits_{m= 1}^{K}  \sum\limits_{s=1}^{M} \sqrt{\beta_{\text{B2I2U},mk}}
 {\bf a}^T_{\text{B2I},m} {\bf a}_{\text{B2I},k}^*
   \zeta_{\text{y-I2U},mk,s}
{\left[ {\hat{\bar{\bf g}}}_{\text{I2U},mk} \right]}_s  {\left[{\hat{\bar{\bf g}}}_{\text{I2U},mm} \right]}_s^*
   \right|^2.
\end{align}

\subsection{ Calculate $C_{k,i}$} \label{AB}
We can rewrite $C_{k,i}$ as
\begin{align}
     C_{k,i}&=  \mathbb{E}\left\{  {\left| { {\bf g}_{k}^T {\bf w}_i } \right|}^2\right\}
   =\frac{\eta_i \rho_d}{N}\sum\limits_{m=1}^{K}
   \sum\limits_{n=1}^{K}
    \mathbb{E}\left\{
   {\bf g}_{\text{I2U},mk}^T { \bm \Theta_{m}} {\bf G}_{\text{B2I},m} {\bf a}_{\text{B2I},i}^*
   {\bf a}_{\text{B2I},i}^T {\bf G}_{\text{B2I},n}^H
    { \bm \Theta_{n}}^H {\bf g}_{\text{I2U},nk}^*
    \right\}.
\end{align}

1) For $m=n$, we have
\begin{align}
   & \mathbb{E}\left\{
   {\bf g}_{\text{I2U},mk}^T { \bm \Theta_{m}} {\bf G}_{\text{B2I},m} {\bf a}_{\text{B2I},i}^*
   {\bf a}_{\text{B2I},i}^T {\bf G}_{\text{B2I},n}^H
    { \bm \Theta_{n}}^H {\bf g}_{\text{I2U},nk}^*
    \right\}\\
    &=\mathbb{E}\left\{ \left| {\bf g}_{\text{I2U},mk}^T { \bm \Theta_{m}} {\bf G}_{\text{B2I},m} {\bf a}_{\text{B2I},i}  \right|^2\right\}\nonumber\\
    &= M \frac{\beta_{\text{B2I2U},mk} }{v_{\text{I2U},mk}} \left|{\bf a}^T_{\text{B2I},m}{\bf a}_{\text{B2I},i}^* \right|^2+
   MN\frac{\beta_{\text{B2I2U},mk}}{v_{\text{B2I},m}v_{\text{I2U},mk} }
   +MN \frac{\beta_{\text{B2I2U},mk}}{v_{\text{B2I},m}}
   \nonumber \\
   &+\beta_{\text{B2I2U},mk}\mathbb{E}\left\{ \left| \bar{\bf g}_{\text{I2U},mk}^T { \bm \Theta_{m}} \bar{\bf G}_{\text{B2I},m} {\bf a}_{\text{B2I},i}^*  \right|^2\right\}.\nonumber
\end{align}

 Next, we focus on the calculation of $\mathbb{E}\left\{ \left| \bar{\bf g}_{\text{I2U},mk}^T { \bm \Theta_{m}} \bar{\bf G}_{\text{B2I},m} {\bf a}_{\text{B2I},i}^*  \right|^2\right\}$.
 \begin{align}
  &\mathbb{E}\left\{ \left| \bar{\bf g}_{\text{I2U},mk}^T { \bm \Theta_{m}} \bar{\bf G}_{\text{B2I},m} {\bf a}_{\text{B2I},i}^*  \right|^2\right\}
  \\
  &=\left| {\bf a}^T_{\text{B2I},m} {\bf a}_{\text{B2I},i}^* \right|^2
   \sum\limits_{s=1}^{M}   \sum\limits_{l=1}^{M}
 \mathbb{E}\left\{ \varrho_{\text{I2U},mk,sl} \right\}
{\left[ {\hat{\bar{\bf g}}}_{\text{I2U},mk} \right]}_s  {\left[{\hat{\bar{\bf g}}}_{\text{I2U},mm} \right]}_s^*
{\left[ {\hat{\bar{\bf g}}}_{\text{I2U},mk} \right]}_l^*  {\left[{\hat{\bar{\bf g}}}_{\text{I2U},mm} \right]}_l\nonumber,
 \end{align}
 where  $\varrho_{\text{I2U},mk,sl}\triangleq e^{j\pi \left(  s-l\right)\epsilon_{\text{y-I2U},mk}} $.

Following the similar process of the calculation of $  \mathbb{E}\left\{ \varrho_{\text{I2U},mk,s} \right\}$, we have
\begin{align}
   \zeta_{\text{y-I2U},mk,sl}
   \triangleq \mathbb{E}\left\{ \varrho_{\text{I2U},mk,sl} \right\} =
\begin{cases}
\frac{3}{\varpi_{ \text{y-I2U},mk,sl}^2} \left(
\frac{\sin \varpi_{  \text{y-I2U},mk,sl} }{\varpi_{  \text{y-I2U},mk,sl}} -\cos\varpi_{  \text{y-I2U},mk,sl} \right) & s \!\ne\! l\\
1& s\! =\!l
\end{cases}.
\end{align}

Thus, we obtain
\begin{align}
  &\mathbb{E}\left\{ \left| \bar{\bf g}_{\text{I2U},mk}^T { \bm \Theta_{m}} \bar{\bf G}_{\text{B2I},m} {\bf a}_{\text{B2I},i}^* \right|^2\right\}
  \\
  &=\left| {\bf a}^T_{\text{B2I},m} {\bf a}_{\text{B2I},i}^* \right|^2
   \sum\limits_{s=1}^{M}   \sum\limits_{l=1}^{M}
 \zeta_{\text{y-I2U},mk,sl}
{\left[ {\hat{\bar{\bf g}}}_{\text{I2U},mk} \right]}_s  {\left[{\hat{\bar{\bf g}}}_{\text{I2U},mm} \right]}_s^*
{\left[ {\hat{\bar{\bf g}}}_{\text{I2U},mk} \right]}_l^*  {\left[{\hat{\bar{\bf g}}}_{\text{I2U},mm} \right]}_l\nonumber.
 \end{align}

2) For $m\ne n$, we have
\begin{align}
     & \mathbb{E}\left\{
   {\bf g}_{\text{I2U},mk}^T { \bm \Theta_{m}} {\bf G}_{\text{B2I},m} {\bf a}_{\text{B2I},i}^*
   {\bf a}_{\text{B2I},i}^T {\bf G}_{\text{B2I},n}^H
    { \bm \Theta_{n}}^H {\bf g}_{\text{I2U},nk}^*
    \right\}\\
    &={\bf a}^T_{\text{B2I},m} {\bf a}_{\text{B2I},i}^* \sqrt{\beta_{\text{B2I2U},mk}}  {\bf a}^H_{\text{B2I},n}  {\bf a}_{\text{B2I},i} \sqrt{\beta_{\text{B2I2U},nk}}
\nonumber \\
&\times  \sum\limits_{s=1}^{M}  \sum\limits_{l=1}^{M}
 \mathbb{E}\left\{\varrho_{\text{I2U},mk,s} \varrho_{\text{I2U},nk,l}^* \right\}
{\left[ {\hat{\bar{\bf g}}}_{\text{I2U},mk} \right]}_s  {\left[{\hat{\bar{\bf g}}}_{\text{I2U},mm} \right]}_s^*
{\left[ {\hat{\bar{\bf g}}}_{\text{I2U},nk} \right]}_l^*  {\left[{\hat{\bar{\bf g}}}_{\text{I2U},nn} \right]}_l, \nonumber
\end{align}
where $\mathbb{E}\left\{\varrho_{\text{I2U},mk,s} \varrho_{\text{I2U},nk,l}^* \right\} = \zeta_{\text{y-I2U},mk,nk,sl}$  given by Lemma \ref{L2}.

Combining 1) and 2) yields
\begin{align}
   & C_{k,i}=\frac{\eta_i \rho_d}{N} \sum\limits_{m=1}^{K} \left(M \frac{\beta_{\text{B2I2U},mk} }{v_{\text{I2U},mk}} \left|{\bf a}^T_{\text{B2I},m}{\bf a}_{\text{B2I},i}^* \right|^2+
   MN\frac{\beta_{\text{B2I2U},mk}}{v_{\text{B2I},m}v_{\text{I2U},mk} }+
   +MN \frac{\beta_{\text{B2I2U},mk}}{v_{\text{B2I},m}}\right)
   \\
  &+\frac{\eta_i \rho_d}{N}\sum\limits_{m=1}^{K}
  \sum\limits_{s=1}^{M}   \sum\limits_{l=1}^{M}
  \left| {\bf a}^T_{\text{B2I},m} {\bf a}_{\text{B2I},i}^* \right|^2
 \zeta_{\text{y-I2U},mk,sl}
{\left[ {\hat{\bar{\bf g}}}_{\text{I2U},mk} \right]}_s  {\left[{\hat{\bar{\bf g}}}_{\text{I2U},mm} \right]}_s^*
{\left[ {\hat{\bar{\bf g}}}_{\text{I2U},mk} \right]}_l^*  {\left[{\hat{\bar{\bf g}}}_{\text{I2U},mm} \right]}_l
\nonumber \\
&+\frac{\eta_i \rho_d}{N}\sum\limits_{m=1}^{K} \sum\limits_{n\ne m}^{K}
   \sum\limits_{s=1}^{M}   \sum\limits_{l=1}^{M}
   {\bf a}^T_{\text{B2I},m} {\bf a}_{\text{B2I},i}^* \sqrt{\beta_{\text{B2I2U},mk}}  {\bf a}^H_{\text{B2I},n}  {\bf a}_{\text{B2I},i} \sqrt{\beta_{\text{B2I2U},nk}}
\nonumber \\
&\times
 \zeta_{\text{y-I2U},mk,nk,sl}
{\left[ {\hat{\bar{\bf g}}}_{\text{I2U},mk} \right]}_s  {\left[{\hat{\bar{\bf g}}}_{\text{I2U},mm} \right]}_s^*
{\left[ {\hat{\bar{\bf g}}}_{\text{I2U},nk} \right]}_l^*  {\left[{\hat{\bar{\bf g}}}_{\text{I2U},nn} \right]}_l. \nonumber
\end{align}

\subsection{Calculate $B_k$} \label{AC}
We rewrite $B_k$ as
\begin{align}
  B_k=\mathbb{E} \left\{ {\left| {{{\bf g}_{k}^T   {\bf w}_k }} \right|}^2 \right\}  -A_k.
\end{align}

Following the similar process of the calculation of $C_{k,i}$, we have

\begin{align}
& \mathbb{E} \left\{ {\left| {{{\bf g}_{k}^T   {\bf w}_k }} \right|}^2 \right\}
\\
&=
 \frac{\eta_k \rho_d}{N} \sum\limits_{m=1}^{K} \left(M \frac{\beta_{\text{B2I2U},mk} }{v_{\text{I2U},mk}} \left|{\bf a}^T_{\text{B2I},m}{\bf a}_{\text{B2I},k}^* \right|^2+
   MN\frac{\beta_{\text{B2I2U},mk}}{v_{\text{B2I},m}v_{\text{I2U},mk} }
   +MN \frac{\beta_{\text{B2I2U},mk}}{v_{\text{B2I},m}}\right)\nonumber
   \\
  &+\frac{\eta_k \rho_d}{N}\sum\limits_{m=1}^{K}
  \sum\limits_{s=1}^{M}   \sum\limits_{l=1}^{M}
  \left| {\bf a}^T_{\text{B2I},m} {\bf a}_{\text{B2I},k}^* \right|^2
 \zeta_{\text{y-I2U},mk,sl}
{\left[ {\hat{\bar{\bf g}}}_{\text{I2U},mk} \right]}_s  {\left[{\hat{\bar{\bf g}}}_{\text{I2U},mm} \right]}_s^*
{\left[ {\hat{\bar{\bf g}}}_{\text{I2U},mk} \right]}_l^*  {\left[{\hat{\bar{\bf g}}}_{\text{I2U},mm} \right]}_l
\nonumber \\
&+\frac{\eta_k \rho_d}{N}\sum\limits_{m=1}^{K} \sum\limits_{n\ne m}^{K}
   \sum\limits_{s=1}^{M}   \sum\limits_{l=1}^{M}
   {\bf a}^T_{\text{B2I},m} {\bf a}_{\text{B2I},k}^* \sqrt{\beta_{\text{B2I2U},mk}}  {\bf a}^H_{\text{B2I},n}  {\bf a}_{\text{B2I},k} \sqrt{\beta_{\text{B2I2U},nk}}
\nonumber \\
&\times
\zeta_{\text{y-I2U},mk,nk,sl}
{\left[ {\hat{\bar{\bf g}}}_{\text{I2U},mk} \right]}_s  {\left[{\hat{\bar{\bf g}}}_{\text{I2U},mm} \right]}_s
{\left[ {\hat{\bar{\bf g}}}_{\text{I2U},nk} \right]}_l^*  {\left[{\hat{\bar{\bf g}}}_{\text{I2U},nn} \right]}_l^*. \nonumber
\end{align}

As such, we have $B_k=C_{k,k}-A_k$.

Combining \ref{AA}, \ref{AB} and \ref{AC} yields the desired result.

\end{appendices}


\nocite{*}
\bibliographystyle{IEEE}
 \begin{footnotesize}

\end{footnotesize}

\end{document}